\newtheorem{definition}{\emph{\underline{Definition}}}
\newtheorem{lemma}{\emph{\underline{Lemma}}}
\newtheorem{example}{\bf Example}
\newtheorem{remark}{\bf \emph{\underline{Remark}}}
\def\({\left(}
\def\){\right)}
\def\b0{{\mathbf{0}}}
\newcommand{\nn}{\nonumber}
\def\BibTeX{{\rm B\kern-.05em{\sc i\kern-.025em b}\kern-.08em
		T\kern-.1667em\lower.7ex\hbox{E}\kern-.125emX}}
\begin{document}
\captionsetup[figure]{name={Fig.}}
\title{\huge Super-resolution Wideband Beam Training for Near-field Communications with Ultra-low Overhead} 
\author{Cong Zhou, Changsheng You,~\IEEEmembership{Member,~IEEE}, Shuo Shi,~\IEEEmembership{Member,~IEEE}, Jiasi Zhou, Chenyu Wu
	\thanks{Cong Zhou, Shuo Shi and Chenyu Wu are with the School of Electronics and Information Engineering, Harbin Institute of Technology, Harbin, 150001, China. (e-mail: zhoucong@stu.hit.edu.cn, crcss@hit.edu.cn, wuchenyu@hit.edu.cn).}
	\thanks{Changsheng You is with State Key Laboratory of Optical Fiber and Cable Manufacture Technology, and also with the Department of Electronic and Electrical Engineering, Southern University of Science and Technology, Shenzhen 518055, China. (e-mails: youcs@sustech.edu.cn).}
	\thanks{Jiasi Zhou is with the School of Medical Information and Engineering, Xuzhou Medical University, Xuzhou, 221004, China. (e-mails: jiasi{\_}zhou@xzhmu.edu.cn).}
	\thanks{\emph{Corresponding author: Shuo Shi and Changsheng You.}}
}

\maketitle
\begin{abstract}
	In this paper, we propose a super-resolution wideband beam training method for \emph{near-field} communications, which is able to achieve ultra-low overhead.
	To this end, we first study the multi-beam characteristic of a sparse uniform linear array (S-ULA) in the wideband.
	Interestingly, we show that this leads to a new beam pattern property, called \emph{rainbow blocks}, where the S-ULA generates multiple grating lobes and each grating lobe is further splitted into multiple versions in the wideband due to the well-known \emph{beam-split} effect.
	As such, one directional beamformer based on S-ULA is capable of generating multiple rainbow blocks in the wideband, hence significantly extending the beam coverage. 
	Then, by exploiting the beam-split effect in both the frequency and spatial domains, we propose a new \emph{three-stage} wideband beam training method for extremely large-scale array (XL-array) systems.
	Specifically, we first sparsely activate a set of antennas at the central of the XL-array and judiciously design the time-delay (TD) parameters to estimate candidate user angles by comparing the received signal powers at the user over subcarriers.
	Next, to resolve the angular ambiguity introduced by the S-ULA, we activate all antennas in the central subarray and design an efficient subcarrier selection scheme to estimate the true user angle.
	In the third stage, we resolve the user range at the estimated user angle with high resolution by controlling the splitted beams over subcarriers to simultaneously cover the range domain.
	Finally, numerical results are provided to demonstrate the effectiveness of proposed wideband beam training scheme, which only needs three pilots in near-field beam training, while achieving near-optimal rate performance.
\end{abstract}
\begin{IEEEkeywords}
	Extremely large-scale array, near-field wideband communications, beam training, beam split, sparse array.
\end{IEEEkeywords}

\section{Introduction}
Next-generation wireless systems are migrating to higher frequency bands, such as millimeter-wave (mmWave) and terahertz (THz), for accommodating data-demanding applications like virtual/augmented reality~\cite{han2019terahertz,tan2022thz, Chen2019survey}.
Moreover, since more and more antennas can be packed in a small area thanks to shorter wavelengths of high-frequency bands, \emph{extremely large-scale arrays} (XL-arrays) are envisioned to be feasible in future wireless communication, which significantly increase the spectrum efficiency and spatial resolution \cite{nf_mag,cong2024near,nf_bm, zhang2023near}.

However, high-frequency bands and XL-arrays also incur two challenges. 
First, the XL-array fundamentally changes the radio propagation modeling, shifting from the far-field planar wavefronts to the near-field \emph{spherical wavefronts} (SWs), which brings both blessing and misfortune \cite{an2024near, lu2023tutorial}.
On one hand, the \emph{beam focusing} phenomenon greatly mitigates the inter-user interference (IUI) and improves the charging efficiency for wireless power transfer (WPT)~\cite{zhang20236g, xu2024resource, wpt}. 
In addition, the beam focusing property in near-field communications also enhances the so-called location division multiple access (LDMA) proposed in \cite{ldma} (instead of space division multiple access (SDMA) in  far-field communications), which can simultaneously serve different users at the same angle and allow for massive connectivity.
Moreover, the near-field multiple-input multiple-output (MIMO) channel demonstrates a higher channel rank and achieves a high multiplexing gain, even in the line-of-sight (LoS) scenario \cite{liu_review}.
Besides wireless communications, near-field SWs also enhance radar sensing, enabling the estimation of both target angle and range using conventional subspace-based algorithms (e.g., the multiple signal classification (MUSIC) method).
This eliminates the need for distributed arrays with system synchronization, which are typically required in the far-field for target range estimation \cite{cong2024near}.
On the other hand, for beam training, near-field communications need codewords sampled in both the angle and range domains, which leads to severe beam training overhead~\cite{you2024next}.

Second, since the phase-shifter (PS) based beamforming structure is frequency-independent, the extremely large bandwidth in high-frequency bands will introduce the so-called beam-split effect \cite{park2022beam}, for which the beams generated at different subcarriers (or frequencies) will be focused on varying locations and deviate from the desired location, resulting in significant performance degradation \cite{cui2024near, su2023wideband}. 
In this paper, we propose a new near-field wideband beam training method with super-resolution and ultra-low overhead by judiciously designing the time-delay (TD) beamforming to control the beam-split effect.
\vspace{-5pt}
\subsection{Prior Works}
To fully exploit the spatial multiplexing gain of XL-arrays, channel state information (CSI) acquisition is essential for beamforming designs.
Among others, channel estimation is an efficient approach to acquire explicit CSI by using e.g., least square (LS), minimum mean-squared error (MMSE) and compressed sensing (CS) methods. In particular, for low-complexity channel estimation, conventional compressed sensing (CS) techniques based on the discrete Fourier transform (DFT) codebook may face challenges in near-field scenarios, due to the characteristics of SWs.
Specifically, the near-field channel is no longer sparse in the angular domain, while it exhibits sparsity in the polar domain \cite{Cui2022channel}.
Based on this observation, the authors in \cite{Cui2022channel} proposed a polar-domain codebook, where the angle and range domains are uniformly and non-uniformly sampled, to sparsely represent the near-field channel.
As such, orthogonal matching pursuit (OMP) techniques can be utilized to recover the sparse channel. 
In addition, several parametric channel estimation methods have been proposed for near-field communications to enhance estimation accuracy.
For example, the authors in~\cite{huang2023low} proposed a low-complexity localization method, which decoupled the angle and range estimation for each path by leveraging the anti-diagonal elements of the covariance matrix.
As such, the LS criterion can be applied to estimate the channel gain for each path and then the near-field multi-path channel can be directly recovered.
Moreover, some studies exploited the spatial non-stationarity of the near-field channel to further reduce the required number of pilots.
For example, the authors in \cite{chen2023non} proposed to use a group time block code to transform the near-field spatial non-stationary channel into a series of stationary channels, followed by the application of the simultaneous OMP (SOMP) algorithm to estimate the non-stationary channel.

Alternately, beam training is another efficient method to inexplicitly acquire the CSI, by performing efficient beam sweeping over a predefined beam codebook and selecting the best beam codeword that yields the maximum received power at the user.
Note that beam training is particularly efficient in high-frequency bands due to the channel sparsity and robustness to noise \cite{liu_review}. Specifically, in the low-SNR regime, the accuracy of typical channel estimation methods (e.g., LS) may not work well, while beam training methods can provide more accurate CSI since it achieves a high received signal power when the beam aligns well with the user channel \cite{zheng2022survey,zhou2024multi}.
For practical implementation, various beam training methods for far-field communications (e.g., hierarchical search schemes) have been documented in standardization including IEEE 802.15.3c and IEEE 802.11ad \cite{xiao2016hierarchical}.
For near-field beam training, the authors in \cite{Cui2022channel} utilized the polar-domain codebook to determine the optimal near-field codeword in an exhaustive-search manner. 
Although this method can effectively resolve the energy-spread issue when conventional DFT codebook is applied in the near-field communications, it suffers from unaffordable beam training overhead, which is proportional to the product of the numbers of antennas and sampled ranges. 
To solve this problem, efficient near-field beam training methods with low overhead have been proposed which can be largely classified into two categories, namely, narrow-band and wideband beam training methods.
\subsubsection{Narrow-band Beam Training}
Compared to wideband-based methods, narrow-band beam training methods leverage only a single subcarrier per time slot, conserving spectral resources but at the cost of higher temporal resource consumption.
Specifically, the authors in \cite{two_phase} proposed a two-phase near-field beam training scheme, which decoupled the angle and range estimation.
They used the energy-spread effect in the received beam pattern with the DFT codebook, for which the user angle was estimated by the middle of a predefined angular support region and the user range was resolved via the polar-domain cookbook proposed in \cite{Cui2022channel}.
Further, to break the resolution limitations, an off-grid beam training method was investigated in \cite{wuxun}, which leveraged the range information underlying the energy-spread beam pattern and shared similar beam training overhead with the method in \cite{two_phase}.
However, the beam training overhead of the above two methods is still proportional to the number of antennas, which is prohibitively high in the XL-array scenarios with a large number of antennas.
To resolve this issue, the authors in~\cite{twostage} and \cite{dai_hier} designed an hierarchical beam training scheme, which first estimated a coarse user angle with wide beams and then gradually obtained finer user angle-range points round by round using narrow beams.
Moreover, the authors in \cite{zhou2024multi} extended the multi-beam training scheme in conventional far-field communications to the near-field by utilizing a novel antenna spare-activation method, which resolved several issues such as coverage holes when subarray-based multi-beam generation method is applied in the near-field communications.
In addition, deep learning techniques were utilized in \cite{dl} to reduce the beam training overhead for near-field communications, for which deep neural networks (DNNs) are trained using conventional far-field DFT codebooks and near-field polar-domain codebooks to predict the user angle and range, respectively.
To further improve the prediction accuracy, the authors in \cite{jiang2023near} proposed to train the DNN by utilizing only the near-field polar-domain codebook, which achieves a better achievable-rate performance than other deep-learning based benchmark schemes with the same pilots.
\subsubsection{Wideband Beam Training}
To further reduce beam training overhead, wideband beam training methods were proposed in \cite{cui2022near, zheng2024near}, which essentially leveraged multiple subcarriers to cover different locations or directions in one pilot symbol.
Thanks to the abundant spectrum resources in high frequency bands, wideband beam training methods can estimate the user angle and range within a few pilots.
In particular, the authors in \cite{cui2022near} proposed to control the beam-split effect in near-field wideband communications using the TD beamforming structure and they controlled the beams formed at different subcarriers to sequentially cover specific range rings, indicating that only the overhead of range sampling is incurred.
However, this method still relies on the number of range sampling grids and suffers from a relatively high beam training overhead.
To cope with this problem, a novel near-field wideband beam training method based on the range-dependent beam-split effect was proposed in \cite{zheng2024near}.
This method controlled the beams formed at different subcarriers focused on several inclined strips, which can simultaneously cover multiple angles and range-rings within one pilot, resulting in lower beam training overhead as compared to \cite{cui2022near}.
However, the above two methods can only ensure that the array-gain loss of the optimal beam does not exceed $3$-dB, but they can not overcome the resolution limitations of the angle and range estimation under limited spectral resources.

\subsection{Contributions and Organizations}
To break the resolution limitations, we propose to exploit the beam-split effect in both the frequency and spatial domains to achieve super-resolution beam training for a single-user located in the near-field regions.
Moreover, inspired by the subarray-activation method in~\cite{twostage}, we propose to activate a central subarray to create a far-field scenario, which decouples the angle and range estimation and allows us to complete the beam training with ultra-low overhead.
The main contributions of this paper are summarized as follows.
\begin{itemize}
	\item First, we characterize the beam-split effect in both the frequency and spatial domains for an activated \emph{sparse} uniform linear array. 
	It is shown that the beam pattern of the S-ULA over all subcarriers occurs a new phenomenon called \emph{rainbow blocks}, which can be utilized to extend the beam coverage in the angular domain.
	Moreover, we unveil that different rainbow blocks exhibit different block widths, which increase with the rainbow block index.
	By judiciously designing the TD parameter, we achieve beam coverage over the entire angular domain with an average angular resolution of $ \frac{1}{MU} $, where $ M $ is the number of subcarriers, and $ U $ is the sparse activation interval.
	\item Second, we propose a new three-stage training method with super-resolution and ultra-low overhead based on the controllable beam-split effect in both the frequency and spatial domains of an activated S-ULA. 
	In the first stage, we activate a central S-ULA and several candidate user angles can be identified via beam sweeping with the rainbow blocks over time.
	Then, in the second stage, we activate a central dense subarray and meticulously select several subcarriers to resolve the angular ambiguity introduced by the beam-split effect in the spatial domain.
	As such, we utilize the single beam formed at each selected subcarrier to precisely cover the candidate user angles and determine the actual user angle by comparing the received power over frequencies.
	In the third stage, the entire XL-array is activated, for which we control the splitted beams over all subcarriers focused within a desired range at the estimated user angle. Consequently, the range estimation can achieve an average resolution of $ M/\Delta_r $, where $ \Delta_r $ denotes the interval of user range distribution.
	\item Finally, extensive numerical results are provided to validate the effectiveness and super-resolution of our proposed three-stage beam training method.
	Compared to benchmark schemes, such as the exhaustive search based and the near-field rainbow based beam training methods, the proposed three-stage beam training method can achieve higher angle and range estimation accuracy with only three pilot symbols.
	This is because much more beams are generated to cover the angular and range domains for achieving higher spatial resolution. 
	Moreover, it is shown that thanks to the super-resolution, the proposed near-field beam training method can achieve nearly the same rate-performance as the perfect CSI based beamforming.	
\end{itemize}

\textit{Organizations}: The rest of this paper is organized as follows.
In Section \ref{Sec:System model}, the channel model and the TD beamforming are introduced.
In Section \ref{Sec:Multi-beam Characteristic}, we discuss the multi-beam characteristic and the rainbow-block distribution for an activated S-ULA.
Then, a three-stage near-field beam training method with super-resolution and ultra-low overhead based on the activated S-ULA is proposed in Section \ref{sec:proposed method}.
Finally, extensive simulation results are presented in section \ref{Sec:numericalResults} to demonstrate the effectiveness of the proposed wideband beam training method, followed by the conclusions made in Section \ref{Sec:Conclusions}. 

\textit{Notations}: Bold lower-case letters represent vectors, while bold upper-case letters denote matrices. In addition, calligraphic letters are used to signify sets.
For vectors and matrices, the symbol $ (\cdot)^{H} $ refers to the conjugate transpose operation. The notations $ \left|\cdot\right| $ and $ \left\lVert \cdot\right\lVert $ correspond to the absolute value of a scalar quantity and the $ \ell_{2} $ norm, respectively.
Finally, the symbol $ \lfloor \cdot \rfloor $ denotes the floor function, which rounds down to the nearest integer, while $ \lceil \cdot \rceil $  represents the ceiling function, which rounds up to the nearest integer.
\begin{figure}
	\centering
	\includegraphics[width = 0.95\columnwidth]{./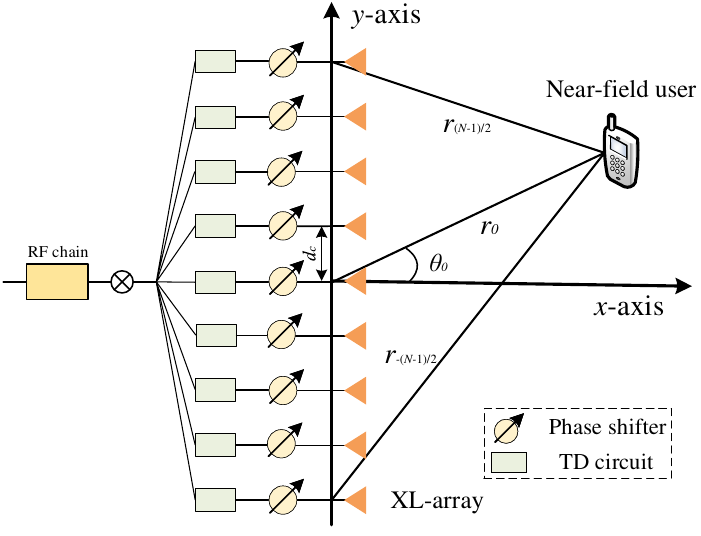}
	\caption{A wideband Near-field XL-array communication system.}
	\label{fig:systemModel}
\end{figure}
\section{System Model} 
\label{Sec:System model}
In this section, we first introduce the near-field wideband channel model for XL-arrays, and then present the TD beamforming architecture, which is utilized to control the wideband beam-split effect.
\subsection{ Near-field Wideband Channel Model}
We consider a wideband XL-array communication system as shown in Fig. \ref{fig:systemModel}, where the BS is equipped with a dense ULA (D-ULA) with $ N $ antennas (assuming to be an odd integer
for convenience), serving a single-antenna user. 

The D-ULA is assumed to be placed along the $y$-axis and centered at the origin. Herein, the $ n $-th antenna of the D-ULA is located at ($0, nd_{\rm c}$), where $ n \in \mathcal{N} \triangleq \{0,\pm 1,\cdots, \pm \frac{N-1}{2}\} $ denotes the antenna index.
The inter-antenna spacing of the D-ULA is $ d_{\rm c} = \frac{\lambda_{\rm c}}{2} $, where $ \lambda_{\rm c} = \frac{c}{f_{\rm c}} $ represents the wavelength of the central subcarrier with $ c $ and $ f_{\rm c} $ denoting the light speed and the frequency of the central subcarrier.
The single-antenna user is located at $ (r_0\sqrt{(1-\theta_0^2)}, r_0\theta_0) $, where $ r_0 $ and $ \theta_0 \in [-1,1) $ represent the BS-user range and spatial angle (see Fig.~\ref{fig:systemModel}), respectively.
The bandwidth is denoted by $ B $, which is divided into $ M $ subcarriers.
In particular, the frequency of the $ m $-th subcarrier is given by
$$ f_m = f_{\rm c} + (m-1-\frac{M-1}{2})\frac{B}{M}, \forall m \in \mathcal{M}, $$
where $ \mathcal{M} \triangleq \{1, 2, \cdots, M\} $ denotes the set of subcarrier index.

For the large-aperture XL-arrays, the user is assumed to be located in its Fresnel near-field region.
In particular, the BS-user range satisfies $ Z_{\rm F} <r_0< Z_{\rm Eff} $, where $Z_{\rm F} = \max{\{ d_{R} , 1.2D \} } $ and $Z_{\rm Eff}=0.367\frac{2D^2}{\lambda_{\rm c}}$ denote the Fresnel distance and effective Rayleigh distance with $ D = (N-1)d_{\rm c} $ denoting the array aperture \cite{cui2024near}.
Herein, $ d_{R} $ is shown to be several wavelengths in \cite{ouyang2024impact} and the Fresnel distance can be simplified by $ Z_{\rm F} = 1.2D $.
For example, when the central carrier operates at $f_{\rm c} = 60$ GHz and the number of antennas is $N = 513$, the effective Rayleigh distance $ Z_{\rm R} $ is $120$ m, which means that the user is more likely to be located in the near-field region of the XL-array.
Therefore, the channel between the BS and the user can be modeled by the uniform spherical wavefronts (USW)~\cite{lu2023tutorial}.
Moreover, due to severe path-loss and shadowing effects \cite{nf_ris} in high-frequency bands such as mmWave and even THz, only the LoS path is considered in this paper.
As such, the near-field LoS channel from BS$\to$user can be modeled as~\cite{yunpu_swipt}
\begin{equation}\label{Eq:nf-model}
	\mathbf{h}^H_{m} \approx \sqrt{N}\beta_m e^{-\frac{\jmath 2 \pi}{\lambda_m}r_0}\mathbf{b}_m^{H}(r_{0}, \theta_{0}),
\end{equation}
where the parameter  $ \beta_m = \frac{\lambda_m}{4\pi r_0} $ represents the LoS path gain at the $ m $-th subcarrier with $ \lambda_m $ denoting the wavelength of the $ m $-th subcarrier.
In particular, $\mathbf{b}_m(r_{0},\theta_{0})$ represents the near-field channel response vector, given by \cite{yunpu2} 
\begin{equation}\label{eq:near_steering}
	\left[\mathbf{b}_m^H\left(r_{0},\theta_{0}\right) \right]_{n} = \frac{1}{\sqrt{N}}e^{-\frac{\jmath 2 \pi}{\lambda_m}(r_{n}-r_0)}, \forall n\in \mathcal{N},
\end{equation}
with $r_{n} =\sqrt{r_{0}^2+ {n}^2d_c^2-2r_{0}\theta_{0} nd_c}$ denoting the range between the $ n $-th antenna and the user. 
Since $ r_{n} $ is a complicated radical function and difficult to be analyzed.
To resolve this issue, the Fresnel approximation can be utilized, which is shown to be accurate in the Fresnel region~\cite{liu_review}.
Then, $ r_n $ can be approximated as \cite{shao2022target}
\begin{equation}
\label{eq:Fresnel approximation}
	r_{n} \approx r_{0}-nd_c \theta_{0}+n^2 d_c^2\mu_0 ,
\end{equation}
where $ \mu_0 =  \frac{1- \theta_{0}^2}{2 r_{0}} $ \cite{zhou2024sparse}.
\vspace{-10pt}
\subsection{Time-delay and Phase-shifter Based Beamforming} 
Since the channel model in \eqref{Eq:nf-model} is frequency-dependent, when only the frequency-independent PS-based beamforming is applied, the beam-split effect may significantly degrade the rate-performance of wideband communication systems.
To tackle this problem, we consider the TD-based beamforming in this paper, which forms frequency-dependent beams and hence becomes an effective method to compensate or control the near-field beam-split effect~\cite{liao2021terahertz, tan2021wideband, lin2017subarray}. 

Specifically, the XL-array is attached with $ N $ TD circuits and $ N $ PSs, for which each antenna is connected to one TD circuit and one PS.
Each TD circuit is able to tune frequency-dependent phase shifts by introducing controllable delays on the wideband signals.
Mathematically, the TD beamformer for the XL-array is denoted by $ \mathbf{w}_m \in \mathbb{C}^{N \times 1} $, which is given by~\cite{chen2024near}
$$ [\mathbf{w}_m^H]_n = \frac{1}{\sqrt{N}}e^{-{\jmath 2 \pi f_m \tau_n}}, \forall n\in \mathcal{N}, $$
where $ \tau_n $ denotes the adjustable delay of the $ n $-th TD circuit.
In particular, to match the near-field channel \cite{cui2022near}, we set $ \tau_n = \frac{nd_c\theta^{\prime} - n^2d_c^2\mu^{\prime}}{c} $ with $ \theta^{\prime} $ and $ \mu^{\prime} $ denoting the adjustable TD angle and range parameters.
Then, the TD beamformer can be rewritten as
\begin{equation}
	[\mathbf{w}_m^H(\theta^{\prime}, \mu^{\prime})]_n = \frac{1}{\sqrt{N}}e^{-{\jmath \frac{2 \pi}{\lambda_m} (nd_{\rm c}\theta^{\prime} - n^2d_c^2\mu^{\prime})}}, \forall n\in \mathcal{N},
\end{equation}
which is similar to the form of the channel response vector in~\eqref{eq:near_steering}.
Similar to the TD beamforming, the PS beamformer for the XL-array is given by
\begin{equation}
	\left[\mathbf{w}_{\mathrm{PS}}\left(\theta_p^{\prime}, \mu_p^{\prime}\right)\right]_n=\frac{1}{\sqrt{N_t}} e^{-\jmath \frac{2\pi}{\lambda_c}\left(n d \theta_p^{\prime}-n^2 d^2 \mu_p^{\prime}\right)}, \forall n\in \mathcal{N},
\end{equation}
where $ \theta_p^{\prime} $ and $ \mu_p^{\prime} $ denote the PS angle and range parameters, respectively.
Based on the above, the effective beamformer $ \mathbf{w}_m(\theta^{\prime}, \mu^{\prime},\theta_p^{\prime},\mu_p^{\prime}) $ for the XL-array can be written as
\begin{equation}
	\mathbf{w}_m(\theta^{\prime}, \mu^{\prime};\theta_p^{\prime},\mu_p^{\prime}) = \mathbf{w}_m(\theta^{\prime}, \mu^{\prime}) \odot \mathbf{w}_{\mathrm{PS}}\left(\theta_p^{\prime}, \mu_p^{\prime}\right).
\end{equation}
Then, the received signal at the $ m $-th subcarrier by the user is given by
\begin{equation}
	y_{m} = \sqrt{P_t} \mathbf{h}_m^H \mathbf{w}_m(\theta^{\prime}, \mu^{\prime}; \theta_p^{\prime}, \mu_p^{\prime}) x_m + n_m,
\end{equation}
where $ P_t $ and $ x_m $ denote the transmit power and the baseband signal with $ \mathbb{E}(|x_m|^2) = 1  $. Moreover, $ n_m \sim \mathcal{C N}(0, \sigma^2) $ represents the additive white Gaussian noise (AWGN) with $ \sigma^2 $ denoting the noise power.
\section{Multi-beam Characteristic of Activated Central S-ULA}
\label{Sec:Multi-beam Characteristic}
In this section, we characterize the beam property of a sparse array in the wideband\footnote{Considering that the TD beamforming is enough for the angle control of the beams at entire subcarriers, the PSs are turned off for saving energy.}. 
Interestingly, it is shown that the sparse array and wide bands both lead to the beam-split effect, corresponding to the spatial and frequency domains, respectively.
Such beam-split effects in both spatial and frequency domains will be exploit in Section \ref{sec:proposed method} to design an efficient wideband beam training method for near-filed communications.
\vspace{-10pt}
\subsection{Far-field Wideband Channel Model for S-ULA}
\label{sec:far-field channel}
First, we choose an central subarray with $ Q $ antennas, for which the user is assumed to be located in the far-field region of the subarray\footnote{Indeed, this condition is easily satisfied with a considerable number of antennas. For example, assuming $ r_0 > 10 $ m, the number of antennas can be chosen as $ Q = 75 $.}, with respect to (w.r.t.) $ r_0 > R_{\rm Eff}^{\rm SA} = 0.367\frac{2(Q-1)^2d_{\rm c}^2}{\lambda_{\rm c}} $.
Then, we uniformly activate $ \widetilde{Q} $ antennas of the central subarray with $ (U-1) $ deactivated antennas in between as shown in Fig. \ref{fig:sparse array}.
In particular, we have $ \widetilde{Q} = \frac{Q-1}{U} + 1 $, which is assumed to be an integer.
As such, the central subarray with $ Q $ antennas becomes an S-ULA with $ \widetilde{Q} $ antennas.
Hence, the LoS channel between the S-ULA and the user at the $ m $-th subcarrier can be modeled under the conventional planar wavefront, given by
\begin{equation}\label{Eq:SA-ff-model}
	(\mathbf{h}^{\rm SA}_{m})^H = \sqrt{\widetilde{Q}}\beta_m \mathbf{a}_m^{H}(\theta_{0}, U),
\end{equation}
where $ \mathbf{a}_m(\theta_{0}, U) $ represents the far-field array response vector for the activated S-ULA.
Mathematically, $ \mathbf{a}_m(\theta_{0}, U) $ can be expressed as
\begin{equation}\label{SA-far_steering}
	\left[\mathbf{a}_m^H(\theta_{0}, U) \right]_{\tilde{q}} = \frac{1}{\sqrt{\widetilde{Q}}}e^{\jmath\frac{ 2 \pi }{\lambda_m}\tilde{q}Ud_{\rm c}\theta_{0}}, \forall \tilde{q}\in \widetilde{\mathcal{Q}},
\end{equation}
where $ \tilde{q} \in \widetilde{\mathcal{Q}} \triangleq \{-\frac{\widetilde{Q}-1}{2}, -\frac{\widetilde{Q}-1}{2}+1, \cdots, \frac{\widetilde{Q}-1}{2}\} $ denotes the antenna index in the activated S-ULA.
Herein, the far-field TD beamformer for the S-ULA is given by
\begin{figure}
	\centering
	\includegraphics[width = 0.85\columnwidth]{./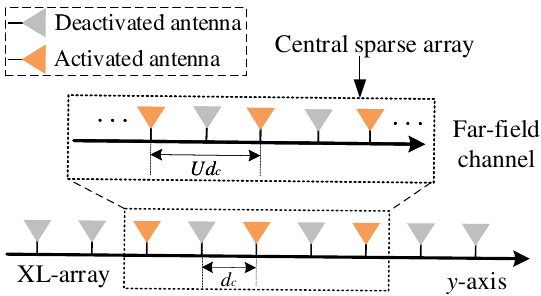}
	\caption{Illustration of an activated central sparse ULA.}
	\label{fig:sparse array}
\end{figure}
\begin{equation}
	\label{eq:SA TD beamformer}
	[\mathbf{w}_m^H(\theta^{\prime}_{\rm SA}, U)]_{\tilde{q}} = \frac{1}{\sqrt{\widetilde{Q}}}e^{-{\jmath \frac{2 \pi}{\lambda_m} \tilde{q}Ud_{\rm c}\theta^{\prime}_{\rm SA}}}, \forall \tilde{q}\in \widetilde{\mathcal{Q}},
\end{equation}
where $ \theta_{\rm SA}^{\prime} = \frac{{\tilde{q}}Ud_{\rm c}\tau_{\tilde{q}}}{c} $ denotes the adjustable TD parameter, while the actual delay of the $ \tilde{q} $-th TD circuit in the activated S-ULA can be represented as $ \tau_{\tilde{q}} = \frac{{\tilde{q}}Ud_{\rm c}\theta_{\rm SA}^{\prime}}{c} $.
\subsection{Multi-beam Characteristic of S-ULA}
Let $ f_m(\theta,\theta^{\prime}_{\rm SA}, U) $ denote the array gain of the activated S-ULA at an observation angle $ \theta $, for which the TD beamformer is set as $ \mathbf{w}_m(\theta^{\prime}_{\rm SA}, U) $ in \eqref{eq:SA TD beamformer}. As such, $ f_m(\theta,\theta^{\prime}_{\rm SA}, U) $ is given by \cite{zhou2024multi}
\begin{equation}
\label{eq:SA array gain}
	\begin{aligned}
		f_m(\theta,\theta^{\prime}_{\rm SA}, U) =\left|\mathbf{w}_m^H(\theta^{\prime}_{\rm SA}, U) \mathbf{a}_m(\theta, U)\right|.
	\end{aligned}	
\end{equation}
Based on \eqref{eq:SA array gain}, we have the following results.
\begin{lemma}[Multi-beam characteristic]
	\label{lemma:multi-beam}
	\emph{Given an activated S-ULA parameterized by $ U	$ and a TD beamformer $ \mathbf{w}_m(\theta^{\prime}_{\rm SA}, U) $. $ |\mathcal{K}_m| $ beams will be formed at the $ m $-th subcarrier, for which the multi-beam beam angle $ \theta_{m}^{(u_m)} $ is given by}
	$$ \theta_m^{(u_m)} = \theta^{\prime}_{\rm SA} + \frac{2k_m^{(u_m)}}{U\rho_m}, \forall u_m \!=\! 1,2, \cdots, |\mathcal{K}_m|, $$
	\emph{where $ \rho_m =  \frac{f_m}{f_c}$ {and} $ k_{m}^{(u_m)} \in \mathcal{K}_m \triangleq \Big\{ \mathbb{Z} ~ \cap \big\{k| \theta^{\prime} + \frac{2k}{U\rho_m} \in [-1,1) \big\} \Big\} $ with $ \mathbb{Z} $ denoting the integer set\footnote{Without loss of generality, we typically assume that $ k_{m}^{(u_m+1)} > k_m^{(u_m)}, \forall k_m^{(u_m)}, k_{m}^{(u_m + 1)} \in \mathcal{K}_m $, leading to the fact that $ k_{m}^{(u_m+1)} = k_m^{(u_m)} + 1 $.}.}
\end{lemma}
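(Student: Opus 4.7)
The plan is to start from the array-gain expression in \eqref{eq:SA array gain} and reduce it to a Dirichlet-kernel form, from which the multi-beam structure becomes transparent. Substituting the sparse TD beamformer in \eqref{eq:SA TD beamformer} and the far-field response vector in \eqref{SA-far_steering} into \eqref{eq:SA array gain}, the cross terms telescope to
\begin{equation*}
	f_m(\theta,\theta^{\prime}_{\rm SA}, U) = \left|\frac{1}{\widetilde{Q}}\sum_{\tilde{q}\in\widetilde{\mathcal{Q}}} e^{\jmath\frac{2\pi}{\lambda_m}\tilde{q}Ud_{\rm c}(\theta-\theta^{\prime}_{\rm SA})}\right|.
\end{equation*}
I would then use $d_{\rm c}=\lambda_{\rm c}/2$ and $\lambda_m=c/f_m$ to rewrite $\frac{2\pi}{\lambda_m}d_{\rm c}=\pi\rho_m$ with $\rho_m=f_m/f_{\rm c}$, so that the summation becomes a geometric series in $\tilde{q}$ with common phase $\phi_m \triangleq \pi U\rho_m(\theta-\theta^{\prime}_{\rm SA})$.

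Next, I would summing the geometric series symmetrically over $\tilde{q}\in\{-(\widetilde{Q}-1)/2,\ldots,(\widetilde{Q}-1)/2\}$, which yields the Dirichlet-kernel expression
\begin{equation*}
	f_m(\theta,\theta^{\prime}_{\rm SA},U) = \left|\frac{\sin(\widetilde{Q}\phi_m/2)}{\widetilde{Q}\sin(\phi_m/2)}\right|.
\end{equation*}
The peaks of this kernel, where the array gain attains its global maximum value of $1$, occur precisely when both numerator and denominator vanish, i.e., $\phi_m/2 = k\pi$ for some integer $k$. Solving $\pi U\rho_m(\theta-\theta^{\prime}_{\rm SA})=2k\pi$ immediately gives the candidate beam angles $\theta = \theta^{\prime}_{\rm SA} + \frac{2k}{U\rho_m}$, matching the stated formula for $\theta_m^{(u_m)}$.

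Finally, I would impose the physical constraint that only spatial angles $\theta\in[-1,1)$ correspond to valid propagation directions; all other peaks of the Dirichlet kernel fall outside the visible region and produce no beam. Intersecting the family $\{\theta^{\prime}_{\rm SA}+2k/(U\rho_m):k\in\mathbb{Z}\}$ with $[-1,1)$ yields exactly the index set $\mathcal{K}_m$ of the statement, and the number of realized beams is $|\mathcal{K}_m|$. The only subtle step, and the one I would be careful with, is the transition from the ratio $\sin(\widetilde{Q}\phi_m/2)/[\widetilde{Q}\sin(\phi_m/2)]$ to asserting that the listed $\theta_m^{(u_m)}$ are precisely the main lobes of the activated S-ULA; this requires noting that the denominator zeros at $\phi_m=2k\pi$ dominate the numerator zeros, whereas intermediate numerator zeros at $\phi_m=2k\pi/\widetilde{Q}$ (with $k\not\equiv 0\pmod{\widetilde{Q}}$) correspond only to sidelobe nulls, not to full-gain beams. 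Apart from this bookkeeping, the derivation is a direct consequence of sparse-array geometry combined with the frequency scaling $\rho_m$ introduced by the TD beamformer.
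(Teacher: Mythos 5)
Your proposal is correct and follows essentially the same route as the paper's own proof: reduce the array gain of the activated S-ULA to the Dirichlet-kernel form $\sin\!\big(\tfrac{\pi\rho_m}{2}\widetilde{Q}U(\theta-\theta^{\prime}_{\rm SA})\big)/\big[\widetilde{Q}\sin\!\big(\tfrac{\pi\rho_m}{2}U(\theta-\theta^{\prime}_{\rm SA})\big)\big]$, set the argument equal to integer multiples of $\pi$ to locate the unit-gain peaks $\theta^{\prime}_{\rm SA}+\tfrac{2k}{U\rho_m}$, and intersect with the visible region $[-1,1)$ to obtain $\mathcal{K}_m$. The only differences are cosmetic: you carry out the geometric-series summation explicitly (the paper cites a reference for that step) and you add the useful bookkeeping remark distinguishing the grating-lobe peaks, where numerator and denominator vanish jointly, from the ordinary numerator-only nulls.
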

\begin{proof}
	The array gain in \eqref{eq:SA array gain} can be simplified as
	\begin{equation}
	\begin{aligned}
		f_m(\theta,\theta^{\prime}_{\rm SA}, U) &= \frac{1}{\widetilde{Q}}\Big|\sum_{\tilde{q}\in \widetilde{\mathcal{Q}}} e^{\jmath \frac{2\pi}{\lambda_m}\tilde{q} Ud_{\rm c}(\theta- \theta^{\prime}_{\rm SA})}\Big|\\
		&\overset{(a_1)}{=} \frac{\sin \frac{\pi\rho_m}{2}\widetilde{Q}U(\theta - \theta^{\prime}_{\rm SA}) }{\widetilde{Q}\sin \frac{\pi\rho_m}{2}U(\theta - \theta^{\prime}_{\rm SA})},
	\end{aligned}
	\end{equation}
	where $ (a_1) $ is obtained from \cite{zhou2024sparse}. By setting $ \frac{\pi\rho_m}{2}U(\theta - \theta^{\prime}_{\rm SA}) = k_{m}\pi $ w.r.t. $ \theta_{m}^{(u_m)} = \theta^{\prime}_{\rm SA} + \frac{2k_{m}^{(u_m)}}{U\rho_m} $, where $ k_{m}^{(u_m)} \in \mathbb{Z} $, we have $ f_m(\theta,\theta^{\prime}_{\rm SA}, U) = 1 $.
	Considering that $ \theta_{m}^{(u_m)} $ is an actual spatial angle satisfying $ -1 \le \theta_{m}^{(u_m)} < 1 $, $ k_{m}^{(u_m)} $ is constrained by 
	$$ k_{m}^{(u_m)} \in \mathcal{K}_m \triangleq \Big\{ \mathbb{Z} ~ \cap \big\{k| \theta^{\prime}_{\rm SA} + \frac{2k}{U\rho_m} \in [-1,1) \big\} \Big\}. $$
	Hence, $ |\mathcal{K}_m| $ beams are formed at the $ m $-th subcarrier in the angular domain and we thus complete the proof.
\end{proof}

From Lemma \ref{lemma:multi-beam}, it is observed that the periodicity of the multiple beams formed at the $ m $-th subcarrier is $ \frac{2}{U\rho_m} $.
Therefore, $ |\mathcal{K}_m| = \lfloor U\rho_m \rfloor  $ or $ \lfloor U\rho_m \rfloor + 1 $, depending on the adjustable TD parameter $ \theta^{\prime}_{\rm SA} $.
In particular, the beam periodicity at the central carrier is $ \frac{2}{U} $, resulting in $ 2/(2/U) = U $ beams in the angular domain.
Overall, the number of multiple beams $ |\mathcal{K}_m| $ depends on the TD parameter $ \theta^{\prime}_{\rm SA} $ and the subcarrier frequency $ f_m $, which is further presented below.
\begin{lemma}[Number of splitted beams at subcarrier $ m $]
	\label{lemma:number of multi-beams}
	\emph{Given an S-ULA with the parameter $ U $ and a TD beamformer $ \mathbf{w}_m(\theta^{\prime}_{\rm SA}, U) $, the number of splitted beams formed at the $ m $-th subcarrier is given by}
	\begin{equation}
		\begin{aligned}
		\label{eq:nuber of multi-beama}
			|\mathcal{K}_m| =
			\left\{\begin{matrix}
				\lfloor U\rho_m \rfloor, \quad\quad~\theta^{\prime}_{\rm SA} + \frac{k_m^{(1)}}{U\rho_m} \ge 1 - \frac{2\lfloor U\rho_m \rfloor}{ U\rho_m}, \\ 
				\lfloor U\rho_m \rfloor + 1, ~~\theta^{\prime}_{\rm SA} + \frac{k_m^{(1)}}{U\rho_m} < 1 - \frac{2\lfloor U\rho_m \rfloor}{ U\rho_m}.
		\end{matrix}\right.		
		\end{aligned}
	\end{equation}
\end{lemma}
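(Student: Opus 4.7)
The plan is to derive Lemma~\ref{lemma:number of multi-beams} as a direct counting corollary of Lemma~\ref{lemma:multi-beam}. Since Lemma~\ref{lemma:multi-beam} already tells us that
\[
\mathcal{K}_m = \mathbb{Z} \cap \Big[\,-\tfrac{U\rho_m(1+\theta^{\prime}_{\rm SA})}{2},\ \tfrac{U\rho_m(1-\theta^{\prime}_{\rm SA})}{2}\Big),
\]
the entire proof reduces to counting integers in a half-open real interval of a specified length. So the first step I would take is to rewrite the membership condition $\theta^{\prime}_{\rm SA}+\tfrac{2k}{U\rho_m}\in[-1,1)$ as the interval above, being careful to preserve the closed/open endpoint structure inherited from Lemma~\ref{lemma:multi-beam}.

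Second, I would observe that the interval has length exactly $U\rho_m$, and invoke the elementary fact that a half-open real interval of length $L$ (with $L\notin\mathbb{Z}$, as is generically the case here) contains either $\lfloor L\rfloor$ or $\lfloor L\rfloor+1$ integers. This immediately yields that $|\mathcal{K}_m|\in\{\lfloor U\rho_m\rfloor,\lfloor U\rho_m\rfloor+1\}$, which matches the two cases of the claim. The only remaining task is to express the dichotomy in the form stated, i.e.\ in terms of the smallest multi-beam index $k_m^{(1)}$.

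Third, I would use the ordering convention $k_m^{(u_m+1)}=k_m^{(u_m)}+1$ from the footnote of Lemma~\ref{lemma:multi-beam}. The multi-beam indices are then $k_m^{(1)},k_m^{(1)}+1,\ldots,k_m^{(1)}+|\mathcal{K}_m|-1$. Hence $|\mathcal{K}_m|=\lfloor U\rho_m\rfloor+1$ is equivalent to the integer $k_m^{(1)}+\lfloor U\rho_m\rfloor$ still lying in the interval, i.e.\
\[
\theta^{\prime}_{\rm SA}+\frac{2\bigl(k_m^{(1)}+\lfloor U\rho_m\rfloor\bigr)}{U\rho_m}<1,
\]
which, after transposing the $\lfloor U\rho_m\rfloor$ term, gives the stated second branch
$\theta^{\prime}_{\rm SA}+\tfrac{2k_m^{(1)}}{U\rho_m}<1-\tfrac{2\lfloor U\rho_m\rfloor}{U\rho_m}$ (reading the statement's $\tfrac{k_m^{(1)}}{U\rho_m}$ as the intended $\tfrac{2k_m^{(1)}}{U\rho_m}$). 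The complementary condition yields the first branch with $|\mathcal{K}_m|=\lfloor U\rho_m\rfloor$, completing the proof.

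The only subtlety, and the main place where a careless argument could go wrong, is the bookkeeping at the two endpoints: the interval is closed on the left (from the constraint $\theta\ge -1$) but open on the right (from $\theta<1$). This asymmetry is what makes it always $k_m^{(1)}$ rather than the largest index that cleanly parametrises the dichotomy, and it is why the inequality on the right is strict while the expression of the first branch comes with a non-strict inequality. I would make sure to handle the measure-zero boundary case $U\rho_m\in\mathbb{Z}$ separately (in which case both candidates for $|\mathcal{K}_m|$ coincide with $U\rho_m$ and the statement still holds trivially), so that no edge case is overlooked.
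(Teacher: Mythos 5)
Your proposal is correct and takes essentially the same route as the paper's own proof: both arguments reduce the dichotomy to whether the $(\lfloor U\rho_m\rfloor+1)$-th candidate, i.e.\ $\theta_m^{(1)}+\tfrac{2\lfloor U\rho_m\rfloor}{U\rho_m}$, still lies in $[-1,1)$ — you do the bookkeeping in index space via counting integers in a half-open interval of length $U\rho_m$, while the paper verifies the same condition directly on the beam angles. Your reading of the statement's $\tfrac{k_m^{(1)}}{U\rho_m}$ as the intended $\tfrac{2k_m^{(1)}}{U\rho_m}$ is consistent with Lemma~\ref{lemma:multi-beam} and with what the paper's proof implicitly uses.
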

\begin{proof}
	For the case $ \theta^{\prime}_{\rm SA} + \frac{k_m^{(1)}}{U\rho_m} = \theta_m^{(1)} < 1 - \frac{2\lfloor U\rho_m \rfloor}{U\rho_m} $, we can obtain
	$$ -1 \!\le\! \theta_m^{(u_m)} \!=\! \theta_m^{(1)} \!+\! \frac{2(u_m-1)}{U\rho_m} \!<\! 1, u_m \!=\! 1,2,\cdots, \lfloor U\rho_m \rfloor \!+\! 1, $$
	which indicates that $ \lfloor U\rho_m \rfloor + 1 $ beams are generated at the $ m $-th subcarrier.
	For the other case $ \theta^{\prime}_{\rm SA} + \frac{k_m^{(1)}}{U\rho_m} \ge 1 - \frac{2\lfloor U\rho_m \rfloor}{U\rho_m} $, we have 
	\begin{align}
		\theta_m^{(\lfloor U\rho_m \rfloor + 1)} &= \theta_m^{(1)} + \frac{2\lfloor U\rho_m \rfloor}{U\rho_m} \nn\\
		&\ge 1 - \frac{2\lfloor U\rho_m \rfloor}{ U\rho_m} + \frac{2\lfloor U\rho_m \rfloor}{ U\rho_m} = 1,
	\end{align}
	which means that only $ \lfloor U\rho_m \rfloor $ beams are formed at the $m$-th subcarrier. Thus, we have completed the proof.
\end{proof}
\begin{remark}
	\emph{Based on Lemma \ref{lemma:number of multi-beams} and the condition $ B \ll f_c $ (i.e., $ \rho_m \approx 1 $), we have 
	\begin{equation}
		\lfloor U\rho_m \rfloor = \left\{\begin{matrix}
			U, \quad\quad~f_m \ge f_{\rm c}, \\ 
			U - 1, ~~f_m < f_{\rm c}.
		\end{matrix}\right.
	\end{equation}
	Hence, the number of splitted beams formed at the subcarriers with  $ f_m > f_{\rm c}  $ is $ U $ or $ U + 1 $, while the number of multiple beams generated at the subcarriers with $ f_m < f_{\rm c} $ is $ (U-1) $ or $ U $. 
 	Overall, approximately $ MU $ beams are simultaneously generated across the $ M $ subcarriers in the angular domain.
 	By effectively leveraging these beams for angle sweeping, we can achieve higher angular resolution, as compared to conventional arrays with half-wavelength spacing given the same frequency resources.} 
\end{remark}
\begin{figure}
	\centering
	\vspace{-14pt}
	\includegraphics[width = 0.85\columnwidth]{./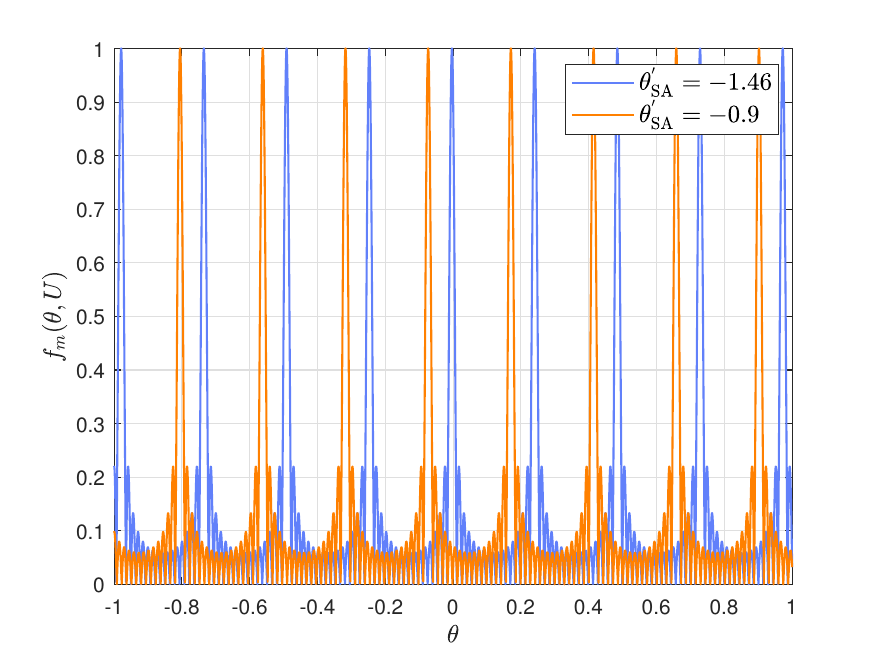}
	\caption{Different numbers of splitted beams in the spatial domain with different TD parameters $ \theta_{\rm SA}^{\prime} $, where $ f_{\rm c} = 30 $ GHz, $ B = 3 $ GHz and $ M = 1024 $.}
	\label{fig:different number of beams}
	\vspace{-10pt}
\end{figure}
\begin{example}
	\emph{We plot the array gain at the $ M $-th frequency versus the spatial angle in Fig. \ref{fig:different number of beams}.
	We suppose that a central S-ULA with $ \widetilde{Q} = 17 $ antennas are activated and the activation interval is $ U = 8 $.
	The BS is assumed to be operated at the central frequency $ f_{\rm c} = 60 $ GHz and the bandwidth is $ B = 3 $ GHz with $ M = 1024 $ subcarriers.
	We consider the multi-beam of the frequency $ f_{\rm H} = f_M =  61.5$ GHz.
	\begin{itemize}
		\item For the TD parameters $ \theta^{\prime}_{\rm SA} = -1.46$ with $ k_M^{(1)} = 2 $, we have $ \theta^{\prime}_{\rm SA} + \frac{k_M^{(1)}}{U\rho_m} = -1.22 < 1 - \frac{2\lfloor U\rho_M \rfloor}{ U\rho_M} = -0.95 $. Hence, $ \lfloor U\rho_M \rfloor + 1 = 9 $ beams can be observed in Fig.~\ref{fig:different number of beams}.
		\item For the TD parameters $ \theta^{\prime}_{\rm SA} = -0.9$ with $ k_M^{(1)} = 0 $, we have $ \theta^{\prime}_{\rm SA} + \frac{k_M^{(1)}}{U\rho_m} = -0.9 > 1 - \frac{2\lfloor U\rho_M \rfloor}{ U\rho_M} = -0.95 $, resulting in $ \lfloor U\rho_M \rfloor = 8 $ beams formed at the $ M $-th subcarrier, as illustrated in Fig.~\ref{fig:different number of beams}.
	\end{itemize}
	This example validates the analysis for the number of splitted beams at different subcarriers in \eqref{eq:nuber of multi-beama}.}	
\end{example} 

\subsection{Beam Pattern of All Subcarriers}
\label{sec:normal TD parametr}
To achieve controllable beam-split effects, we analyze the beam pattern of all subcarriers under the following two conditions: 1)  $ \theta^{\prime}_{\rm SA} \in [-1,1) $; 2) $ \theta^{\prime}_{\rm SA} \notin [-1,1) $.
\begin{lemma}
	\label{lemma:ineffective TD parameter}
	\emph {When the TD parameter satisfies $ \theta^{\prime}_{\rm SA} \in [-1,1) $, there exists one beam formed at each subcarrier that steers towards the same angle $ \theta_m^{(u_m)} = \theta^{\prime}_{\rm SA} $.}
\end{lemma}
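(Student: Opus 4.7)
The plan is to invoke Lemma \ref{lemma:multi-beam} directly and verify that the index $k = 0$ is always an admissible choice under the hypothesis $\theta^{\prime}_{\rm SA} \in [-1,1)$. By Lemma \ref{lemma:multi-beam}, at the $m$-th subcarrier the set of beam angles is
$$\theta_m^{(u_m)} = \theta^{\prime}_{\rm SA} + \frac{2k_m^{(u_m)}}{U\rho_m}, \qquad k_m^{(u_m)} \in \mathcal{K}_m,$$
so the task reduces to showing that $0 \in \mathcal{K}_m$ for every $m \in \mathcal{M}$.

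Recall the definition $\mathcal{K}_m = \mathbb{Z} \cap \{ k : \theta^{\prime}_{\rm SA} + \tfrac{2k}{U\rho_m} \in [-1,1) \}$. Substituting $k=0$ gives the membership condition $\theta^{\prime}_{\rm SA} \in [-1,1)$, which holds by hypothesis. Hence $0 \in \mathcal{K}_m$ irrespective of $m$, and the corresponding beam angle is
$$\theta^{\prime}_{\rm SA} + \frac{2\cdot 0}{U\rho_m} = \theta^{\prime}_{\rm SA},$$
which is independent of the subcarrier frequency $f_m$ (equivalently, of $\rho_m$). Thus on every subcarrier one of the $|\mathcal{K}_m|$ splitted beams points precisely at $\theta^{\prime}_{\rm SA}$, establishing the claim.

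There is no real obstacle in this argument since it is essentially a one-line specialization of Lemma \ref{lemma:multi-beam}. The only thing worth remarking on for the reader is the interpretation: the $k=0$ lobe behaves as a frequency-independent main beam, so TD parameters within $[-1,1)$ fail to induce a nontrivial beam-split across frequency, motivating the complementary case $\theta^{\prime}_{\rm SA} \notin [-1,1)$ studied next in the paper.
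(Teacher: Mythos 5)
Your proposal is correct and follows the same argument as the paper: both verify that $k=0$ belongs to $\mathcal{K}_m$ for every subcarrier precisely because $\theta^{\prime}_{\rm SA} \in [-1,1)$ is a physical angle, so the $k=0$ beam at each subcarrier points at $\theta^{\prime}_{\rm SA}$ independently of $\rho_m$. No gap to report.
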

\begin{proof}
	When $ k_m^{(u_m)} = 0, \forall m \in \mathcal{M} $, we have $ \theta_m^{(u_m)} = \theta^{\prime}_{\rm SA} \in [-1,1) $, which is a physical angle, resulting in $ k_m^{(u_m)} = 0 \in \mathcal{K}_m, \forall m \in \mathcal{M}$.
	Hence, there exists one beam formed at each subcarrier steered towards the same angle $ \theta_m^{(u_m)} = \theta^{\prime}_{\rm SA} $.
\end{proof}
\begin{figure}
	\centering
	\includegraphics[width = 1\columnwidth]{./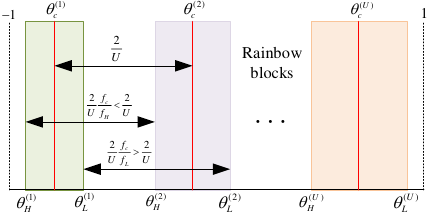}
	\caption{Illustration of the multi-beam beam distributions and far-field rainbow blocks.}
	\label{fig:rainbow blocks}
\end{figure}

Lemma \ref{lemma:ineffective TD parameter} indicates that when $ \theta^{\prime}_{\rm SA} \in [-1,1) $, $M$ beams are not effectively utilized to cover different angles, which is not the desired scheme in the subsequent beam training design.
This case is thus referred to as an \emph{ineffective} TD parameter condition.
Therefore, we mainly consider the case $ \theta^{\prime}_{\rm SA} \notin [-1,1) $ in this paper.
In this case, we can control all the beams over the subcarriers to achieve super-resolution angle estimation.
Next, we mainly focus on the case $ \theta^{\prime}_{\rm SA} < -1 $, while similar analysis can be obtained for the case $ \theta^{\prime}_{\rm SA} > 1 $. 
For $ \theta^{\prime}_{\rm SA} < -1 $, we have $ k_m^{(u_m)} > 0, \forall k_m^{(u_m)} \in \mathcal{K}_m, \forall m \in \mathcal{M}. $

Then, we present the multi-beam pattern of all subcarriers in detail under the condition $ \theta^{\prime}_{\rm SA} < -1 $.
We first define the concept of a \emph{rainbow block} as follows.
\begin{definition}[Rainbow block]
	\label{definition:rainbow block}
	\emph{For each subcarrier, we collect one of the multi-beam angles that shares the same $k_m^{(u_m)}$ as the central subcarrier $k_c^{(u_{\rm c})}$ into a set. This set is referred to as a \emph{rainbow block} w.r.t. the parameter $k_c^{(u_{\rm c})}$.
	In particular, since $ |\mathcal{U}_{\rm c}| = U $, there exist $ U $ rainbow blocks.
	Mathematically, the $ u_{\rm c} $-th rainbow block $ \mathcal{A}^{(u_{\rm c})} $ is defined as
	\begin{equation}
		\mathcal{A}^{(u_{\rm c})} \triangleq \{\theta|\theta = \theta^{\prime}_{\rm SA} + \frac{2k_{\rm c}^{(u_{\rm c})}}{U\rho_m}, \forall m \in \mathcal{M}\},
	\end{equation}
	where $ u_{\rm c} \in \mathcal{U}_{\rm c} \triangleq \{1,2,\cdots U\} $. 
	}	
\end{definition}
To this end, we define the angular interval $ [\theta^{\prime}_{\rm SA} + \frac{2k_{\rm c}^{(u_{\rm c})}}{U\rho_{\rm H}},\theta^{\prime}_{\rm SA} + \frac{2k_{\rm c}^{(u_c)}}{U\rho_{\rm L}}] $ as the coverage region of the $ u_{\rm c} $-th rainbow block\footnote{Considering wideband communication systems, where the number of subcarriers typically exceeds the number of antennas, we can always ensure $ 3 $-dB beam-width within the coverage region of the $ u_{\rm c} $-th rainbow block.}. 
It is worth noting that for the first and last rainbow blocks, there may exist subcarriers $ f_{m_\ell} > f_{\rm c} $ and $ f_{m_r} < f_{\rm c} $ such that
$$ \theta_{m_\ell}^{(1)} = \theta^{\prime}_{\rm SA} + \frac{2k_{\rm c}^{(1)}}{U\rho_{m_{\ell}}} < -1,~~	\theta_{m_r}^{(U)} = \theta^{\prime}_{\rm SA} + \frac{2k_{\rm c}^{(U)}}{U\rho_{m_{r}}} > 1. $$

In other words, the actual angle $ \theta_{m_\ell}^{(1)} $ is given by $ \theta_{m_\ell}^{(1)} = \theta^{\prime}_{\rm SA} + \frac{2k_{m_\ell}^{(1)}}{U\rho_{m_{\ell}}} $ with $ k_{m_\ell}^{(1)} = k_{\rm c}^{(1)}+1 = k_{\rm c}^{(2)} $, while the actual angle $ \theta_{m_r}^{(U)} $ does not exist.
However, this case will not affect our subsequent beam training scheme design.
Without loss of generality, we choose to reuse $ \theta^{(u_{\rm c})}_m $ to denote the angle $ \theta^{\prime}_{\rm SA} + \frac{2k_{m}^{(u_{\rm c})}}{U\rho_{m_{\ell}}} $ in the following discussions, regardless of whether it represents a physical angle or not. 
Then, based on Definition~\ref{definition:rainbow block}, several key observations can be made as follows.
\begin{itemize}
	\item There are $ U $ rainbow blocks and each contains $ M $ angles formed at different subcarriers.
	\item The angles in each rainbow block decrease with the frequency. Specifically, we have $ \theta_m^{(u_{\rm c})} < \theta_{\rm c}^{(u_{\rm c})}, \forall \theta_m^{(u_{\rm c})} \in  \mathcal{A}^{(u_{\rm c})} $ when $ f_m > f_{\rm c} $, while $ \theta_m^{(u_{\rm c})} > \theta_{\rm c}^{(u_{\rm c})} $ when $ f_m < f_{\rm c} $.
	In other words, we can approximately regard the beam angle formed at the central subcarrier as the center of each angular rainbow block.
\end{itemize}

To analyze the beam properties of rainbow blocks, several definitions are presented below.
\begin{definition}[Rainbow block width]
	\emph{Given the TD parameter $ \theta^{\prime}_{\rm SA} $, the width of the $ u_{\rm c} $-th rainbow block is defined as the coverage region of the $ u_{\rm c} $-th rainbow block. Mathematically, we have
		\begin{equation}
			{\rm RW}^{(u_{\rm c})} =  \theta_{\rm L}^{(u_{\rm c})} - \theta_{\rm H}^{(u_{\rm c})} = \frac{2f_cBk_c^{(u_{\rm c})}}{Uf_Lf_H}.
	\end{equation}}
\end{definition}
\begin{definition}[Inter-rainbow interval]
	\emph{Given the TD parameter $ \theta^{\prime}_{\rm SA} $, the inter-rainbow interval between the $ u_{\rm c} $-th rainbow block and $ (u_{\rm c}+1) $-th rainbow block is defined as
	\begin{equation}
		\label{eq:rainbow block gap}
		\begin{aligned}
			{\rm RG}^{(u_{\rm c})} = \theta_{\rm H}^{(u_{\rm c}+1)} - \theta_{\rm L}^{(u_{\rm c})} &= \frac{2k_{\rm c}^{(u_{\rm c}+1)}}{U\rho_{\rm H}} - \frac{2k_{\rm c}^{(u_{\rm c})}}{U\rho_{\rm L}} \\ &= \frac{2f_{\rm c}}{Uf_{\rm L}f_{\rm H}}[-Bk_{\rm c}^{(u_{\rm c})} + f_{\rm L}].
		\end{aligned}	
	\end{equation}
}
\end{definition}

Based on the above definitions, it is observed that the beam pattern of all subcarriers mainly consists of $ U $ rainbow blocks.
Moreover, the width of the rainbow blocks increases with the rainbow block index $ u_{\rm c} $, as illustrated in Fig.~\ref{fig:rainbow blocks}.
In addition, from \eqref{eq:rainbow block gap}, the inter-rainbow interval $ {\rm RG}^{(u_{\rm c})} $ decreases with $ k_{\rm c}^{(u_{\rm c})} $, due to $ -B < 0 $.
When the inter-rainbow interval $ {\rm RG}^{(u_{\rm c})} < 0 $, we refer to this case as achieving \emph{seamless} beam coverage over the angular region of $ [\theta_{\rm H}^{(u_{\rm c})}, \theta_{\rm L}^{(u_{\rm c}+1)}] $.
Next, we discuss how to leverage the beams in the $ U $ rainbow blocks to seamlessly cover the entire angular domain.

First, we let the central subcarrier steer $ U $ beams towards the angles $ \theta_{\rm c}^{(u_{\rm c})} = -1 + \frac{2u_{\rm c}-1}{U}, \forall u \in \mathcal{U}_{\rm c} $, resulting in the following constraint
\begin{equation}
\label{eq:condition f_c}
	\theta_{\rm SA}^{\prime} + \frac{2k_{\rm c}^{(1)}}{U}= -1 + \frac{1}{U}.
\end{equation}
\begin{figure*}[t]
	\centering
	\vspace{-14pt}
	\subfloat[Multi-beam distribution within the rainbow blocks.]{
		\includegraphics[width=0.45\textwidth]{./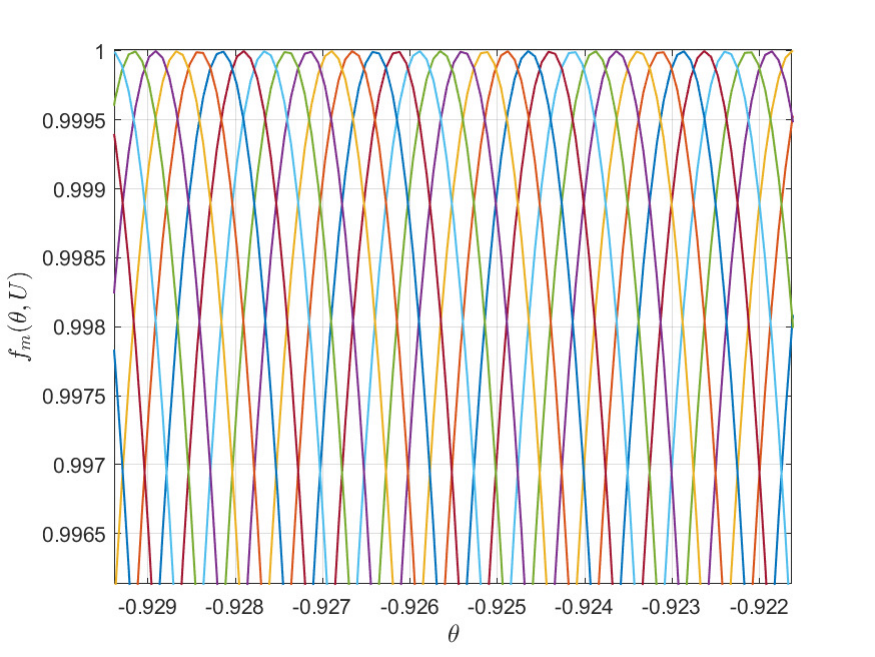}}
	\subfloat[Multi-beam distribution in the boundaries.]{
		\includegraphics[width=0.45\textwidth]{./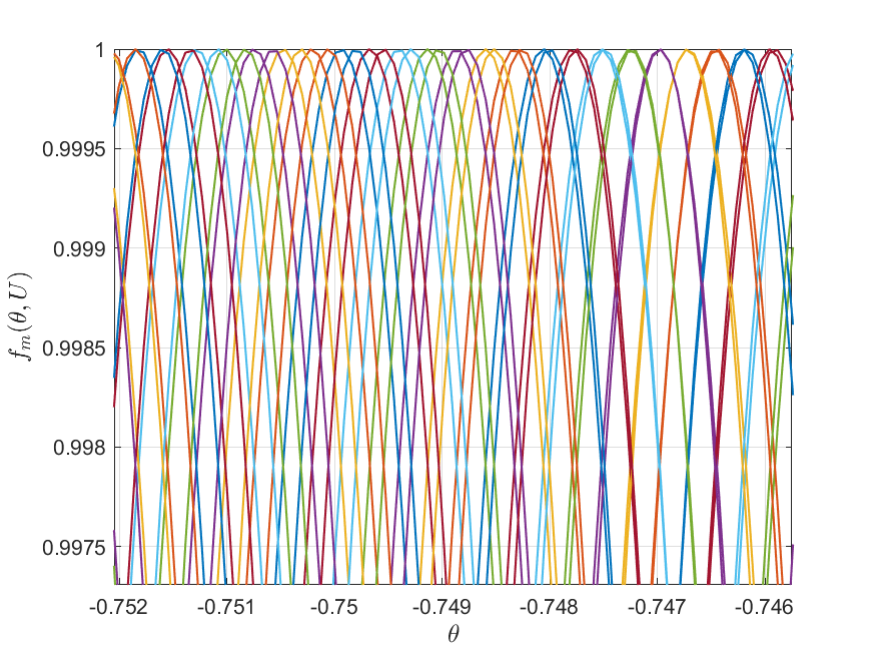}}
	\caption{Multi-beam distributions, where $ f_{\rm c} = 60 $ GHz, $ B = 3 $ GHz and $ M = 1024 $.}
	\vspace{-10pt}
	\label{fig:coverage}
\end{figure*}Then, we impose the following two conditions to seamlessly cover the entire angular domain $ [-1,1) $.
\begin{itemize}
	\item Condition 1: The inter-rainbow interval between each two rainbow blocks is smaller than zero, i.e., $ {\rm RG}^{(u_{\rm c})} \le 0, \forall u_{\rm c} \in \mathcal{U}_{\rm c} $.
	\item Condition 2: The left edge of the first rainbow block $ \theta_{\rm H}^{(1)} $ covers -1, while the right edge of the last rainbow block $ \theta_{\rm L}^{(U)} $ covers 1.
\end{itemize}

Condition 1 means that the angular range $ [\theta_{\rm H}^{(1)}, \theta_{\rm L}^{(U)}] $ is seamlessly covered by the beams within the $ U $ rainbow blocks, while the second condition further ensures that the entire angular domain $ [-1,1) $ is covered.
Since the inter-rainbow interval $ {\rm RG}^{(u_{\rm c})} $ decreases as $ k_{\rm c}^{(u_{\rm c})} $ increases, the first condition results in the following constraint
\begin{equation}
	\label{eq:condition 3}
	{\text{(Constraint 1)}}~~{\rm RG}^{(1)} \le 0.
\end{equation}
Then, Condition 2 imposes constraints on
\begin{equation}
\label{eq:condition 1}
	{\text{(Constraint 2)}}~~\theta_{\rm H}^{(1)} = \theta_{\rm SA}^{\prime} + \frac{2k_{\rm c}^{(1)}}{U\rho_{\rm H}} \le -1,
\end{equation}
\begin{equation}
	\label{eq:condition 2}
	{\text{(Constraint 3)}}~~\theta_{\rm L}^{(U)} = \theta_{\rm SA}^{\prime} + \frac{2k_{\rm c}^{(U)}}{U\rho_{\rm L}} \ge 1.
\end{equation}
As such, we can achieve seamless coverage in the angular domain $ [-1,1) $.
Next, a feasible TD parameter $ \theta_{\rm SA}^{\prime} $ satisfying the above constraints is given as follows.
\begin{lemma}[A feasible TD parameter $ \theta_{\rm SA}^{\prime} $]
	\label{lemma:solution}
	\emph {Given \eqref{eq:condition f_c}-\eqref{eq:condition 2}, a feasible TD parameter $ \theta_{\rm SA}^{\prime} $ can be set as}
	\begin{equation}
	\label{eq:solution}
		\theta_{\rm SA}^{\prime} = -1 + \frac{1-2\lceil k_{\rm th} \rceil}{U}, 
	\end{equation}
	\emph{where $ k_{\rm th} $ is represented as $ k_{\rm th} =  \frac{f_{\rm H}}{B} $.} 
\end{lemma}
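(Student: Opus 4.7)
The plan is to treat \eqref{eq:condition f_c} as a pinning equation that, once an integer $k_{\rm c}^{(1)}$ is selected, forces $\theta_{\rm SA}^{\prime} = -1 + (1-2k_{\rm c}^{(1)})/U$. Each of the three coverage constraints \eqref{eq:condition 3}, \eqref{eq:condition 1} and \eqref{eq:condition 2} then collapses to a purely integer lower bound on $k_{\rm c}^{(1)}$, and I would identify the tightest of these three bounds; its ceiling gives the smallest feasible integer, and back-substitution into the pinning equation produces the closed form claimed in the lemma.

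First, I would substitute the pinned $\theta_{\rm SA}^{\prime}$ into Constraint~2 in \eqref{eq:condition 1}. The inequality $\theta_{\rm SA}^{\prime} + \frac{2k_{\rm c}^{(1)}}{U\rho_{\rm H}} \le -1$ simplifies, after clearing $U$ and using $\rho_{\rm H}=f_{\rm H}/f_{\rm c}$ together with $f_{\rm H}-f_{\rm c}=B/2$, to $k_{\rm c}^{(1)} \ge f_{\rm H}/B = k_{\rm th}$. The smallest feasible integer is therefore $k_{\rm c}^{(1)}=\lceil k_{\rm th}\rceil$, and back-substitution into the pinning equation yields exactly the expression \eqref{eq:solution}.

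Next, I would verify that Constraint~1 and Constraint~3 are dominated by Constraint~2. Constraint~1 in \eqref{eq:condition 3}, expanded via $\mathrm{RG}^{(1)}$ in \eqref{eq:rainbow block gap}, reduces to $k_{\rm c}^{(1)} \ge f_{\rm L}/B$, which is strictly weaker than $k_{\rm c}^{(1)} \ge f_{\rm H}/B$ because $f_{\rm L} < f_{\rm H}$. Constraint~3 in \eqref{eq:condition 2}, after substituting $k_{\rm c}^{(U)}=k_{\rm c}^{(1)}+U-1$ and clearing denominators with $f_{\rm c}-f_{\rm L}=B/2$, reduces to $k_{\rm c}^{(1)}+U \ge f_{\rm H}/B$, which is trivially implied by $\lceil k_{\rm th}\rceil \ge k_{\rm th}$ together with $U \ge 1$. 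Hence the choice $k_{\rm c}^{(1)}=\lceil k_{\rm th}\rceil$ satisfies all three constraints simultaneously.

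The step I expect to be the main obstacle is the clean algebraic reduction of Constraint~2, since it relies on $f_{\rm H}-f_{\rm c}=B/2$, which is only exact in the large-$M$ limit (the true value being $\frac{M-1}{2M}B$); accordingly, the threshold $k_{\rm th}=f_{\rm H}/B$ should be read as the standard wideband approximation used throughout the paper. I would also take extra care with the boundary case $k_{\rm c}^{(1)}=\lceil k_{\rm th}\rceil$ that makes $\theta_{\rm H}^{(1)}=-1$ exactly, invoking Lemma~\ref{lemma:multi-beam} and the spill-over relabeling convention introduced right before Definition~\ref{definition:rainbow block}, so as to confirm that no beam of the first rainbow block is lost at the angular boundary.
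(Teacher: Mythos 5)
Your proposal is correct and follows essentially the same route as the paper: pin $\theta_{\rm SA}^{\prime}=-1+\frac{1-2k_{\rm c}^{(1)}}{U}$ from \eqref{eq:condition f_c}, reduce each of \eqref{eq:condition 3}--\eqref{eq:condition 2} to a lower bound on $k_{\rm c}^{(1)}$, identify $f_{\rm H}/B$ as the binding bound, and back-substitute $k_{\rm c}^{(1)}=\lceil k_{\rm th}\rceil$. Your rewriting of Constraint~3 as $k_{\rm c}^{(1)}+U\ge f_{\rm H}/B$ is algebraically equivalent to the paper's bound $k_{\rm c}^{(1)}\ge \frac{f_{\rm c}-\frac{B}{2}(2U-1)}{B}$ under the same convention $f_{\rm H}-f_{\rm c}\approx B/2$ that the paper uses implicitly, so no gap remains.
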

\begin{proof}
	Based on \eqref{eq:condition f_c}, by substituting $ \theta_{\rm SA}^{\prime} = -1 + \frac{1-2k_{\rm c}^{(1)}}{U} $ into \eqref{eq:condition 1} and \eqref{eq:condition 2}, we have
	\begin{equation}
	\label{eq:simplified 1}
		k_{\rm c}^{(1)} \ge \frac{\rho_{\rm H}}{2\rho_{\rm H} - 2} = \frac{f_{\rm H}}{B},~~ k_{\rm c}^{(1)} \ge \frac{f_{\rm c} - \frac{B}{2}(2U-1)}{B}. 
	\end{equation}
	Then, \eqref{eq:condition 3} can be simplified as
	\begin{equation}
	\label{eq:simplified 2}
		k_{\rm c}^{(1)} \ge \frac{\rho_{\rm L}}{\rho_{\rm H} - \rho_{\rm L}} = \frac{f_{\rm L}}{B}.
	\end{equation}
	Combining \eqref{eq:simplified 1} and \eqref{eq:simplified 2}, we have 
	\begin{equation}
	\label{eq:k_th condition}
		k_{\rm c}^{(1)} \ge k_{\rm th} \triangleq \max\Big\{ \frac{f_{\rm L}}{B},\frac{f_{\rm c} - \frac{B}{2}(2U-1)}{B}, \frac{f_{\rm H}}{B} \Big\} \!=\! \frac{f_{\rm H}}{B}.
	\end{equation}
	
	By substituting \eqref{eq:k_th condition} into \eqref{eq:condition f_c}, we can obtain a feasible solution for the TD parameter $ \theta_{\rm SA}^{\prime} $, which is $ \theta_{\rm SA}^{\prime} = -1 + \frac{1-2\lceil k_{\rm th} \rceil}{U} $ and thus completing the proof.
\end{proof}

Based on Lemma \ref{lemma:solution}, we can achieve seamless angular coverage by setting $ \theta_{\rm SA}^{\prime} = -1 + \frac{1-2\lceil k_{\rm th} \rceil}{U} $.
However, the coverage region of each adjacent rainbow block may slightly overlap, leading to denser coverage at the boundaries\footnote{Indeed, the TD parameter value that we choose in \eqref{eq:solution} is the solution with the smallest overlapping coverage regions between rainbow blocks. It can be observed that the inter-rainbow interval $ {\rm RG}^{(u_{\rm c})} $ increases with $ k_c^{(1)} $. Hence, we choose $ k_{\rm c}^{(1)} = k_{\rm th} $ (see \eqref{eq:k_th condition}), resulting in small overlapping regions ($ |{\rm RG}^{(u_{\rm c})}| $).}.
Some of the beam angles in the overlapping coverage regions may be steered towards the same angle, w.r.t. $ \theta^{(u_{\rm c})}_{m_{\rm L}} = \theta^{(u_{\rm c}+1)}_{m_{\rm H}} $, where $ f_{m_{\rm L}} < f_{\rm c} $ and $ f_{m_{\rm H}} > f_{\rm c} $.
In this case, we have
\begin{equation}
	\label{eq:suppose equation}
	{k_{\rm c}^{(u_{\rm c})}} = \frac{f_{m_{\rm L}}}{f_{m_{\rm H}} - f_{m_{\rm L}}}.
\end{equation}

\begin{remark}[Overlapping coverage regions]
	\emph{In general, $ \frac{f_{m_{\rm L}}}{f_{m_{\rm H}}- f_{m_{\rm L}}} $ is not an integer, thus \eqref{eq:suppose equation} does not hold.
	However, the angular coverage of adjacent beams within the overlapping regions is highly close, resulting in inefficient utilization of splitted beam resources.
	Hence, too large overlapping regions between rainbow blocks need to be avoided.
	The maximum overlapping region is $ |RG^{(U-1)}| $, given by
		\begin{equation}
			\begin{aligned}
				|{\rm RG}^{(U-1)}| &= \frac{2f_{\rm c}}{Uf_{\rm L}f_{\rm H}}|-Bk_{\rm c}^{(U-1)} + f_{\rm L}| \\
				&{=}\frac{2f_{\rm c}}{Uf_{\rm L}f_{\rm H}}\Big|f_{\rm L} - B\Big(\lceil \frac{f_{\rm H}}{B} \rceil+ U-1\Big) \Big| \overset{(a)}{\approx} \frac{2B}{f_{\rm c}},
			\end{aligned}
		\notag
		\end{equation}
	where the approximation $(a)$ is due to $B\lceil \frac{f_{\rm H}}{B} \rceil\approx f_{\rm H}$ and $\frac{f_{\rm c}}{f_{\rm L}f_{\rm H}}\approx\frac{1}{f_{\rm c}}$. It can be observed that the maximum overlapping region is proportional to the relative bandwidth $ \frac{B}{f_{\rm c}} $.
	For the BS parameters in Example 1, we have $ |{\rm RG}^{(U-1)}| = 0.1 $, which is significantly smaller than the width of a rainbow block.
	Since the overlapping regions are relatively small, we can roughly assume that the $ MU $ beams uniformly cover the angular domain ranging from $ -1 $ to $ 1 $.}		
\end{remark}

We plot the multi-beam pattern of all subcarriers inside the rainbow blocks and around boundaries in Figs. \ref{fig:coverage}(a) and \ref{fig:coverage}(b), respectively.
The BS parameters are the same as that in Example 1.
Fig. \ref{fig:coverage}(a) shows that the beams inside the rainbow block uniformly cover specific angular regions and approximately $ U = 8 $ beams are steered towards an angular range of $ 0.02 $.
In contrast, the method in \cite{cui2022near} steers only one beam in the same space, demonstrating that the proposed SA-based method can significantly enhance the spatial resolution with the same spectrum resources.
Moreover, it can be observed from Fig. \ref{fig:coverage}(b) that although multiple beams formed at different subcarriers densely cover the boundaries of rainbow blocks, the angles of many beams are very close to each other, which fails to improve resolution, even in high signal-to-noise ratio conditions, leading to inefficient utilization of multi-beam resources.
\vspace{-5pt}
\section{Proposed Wideband Near-field Beam \\ Training Scheme}
\label{sec:proposed method}
In this section, we propose an effective wideband beam training method.
The proposed wideband beam training method is composed of three-stages namely, the \emph{angle sweeping using multiple rainbow blocks}, \emph{ambiguity elimination} and the \emph{range sweeping} as illustrated in Fig. \ref{fig:Algorithm}.
The key idea is to leverage the beam-split effect in both frequency and spatial domains of an activated S-ULA to extend the beam coverage regions and to achieve super-resolution angle estimation.
Then, the property, that the single beam at each subcarrier is approximately uniformly spaced in the angular domain, is utilized to select specific subcarriers with appropriate frequency intervals.
We leverage the single beam of these selected subcarriers to cover the candidate user angles and resolve the angular ambiguity by comparing the received power over the selected subcarriers.
Finally, by activating the entire XL-array, we control the splitted beams at all subcarriers to be focused at the estimated user angle but different ranges, which only needs one pilot to achieve range sweeping.
\vspace{-10pt}
\subsection{Angle Estimation}
The proposed angle estimation consists of two stages, including the angle sweeping and ambiguity elimination. 
In particular, the angular ambiguity arises from the beam-split effect in the spatial domain due to the sparsity of the activated S-ULA, which can be resolved by an activated central subarray.
\subsubsection{Angle Sweeping}
In Section \ref{Sec:Multi-beam Characteristic}, it is revealed that an activated central S-ULA possesses a beam-split effect in the spatial domain, which can be used for angle sweeping with super-resolution.
Motivated by this phenomenon, we first activate a central S-ULA comprising $ \widetilde{Q} $ antennas with $ Ud_0 $ inter-antenna spacing.
The TD beamformer is given by \eqref{eq:SA TD beamformer}, while the TD parameter $ \theta_{\rm SA}^{\prime} $ can be obtained in \eqref{eq:solution}.
Then, the received signal at the $ m $-th subcarrier is given by
\begin{equation}
	y_{m}^{\rm SA} = \sqrt{P_t} (\mathbf{h}_m^{\rm SA})^H \mathbf{w}_m\left(\theta_{\rm SA}^{\prime}, U \right) x_m+n_m.
\end{equation}
However, due to the increased path loss at higher frequencies, we define a \emph{calibrated} received signal power at the $m$-th subcarrier, given by~\cite{cui2022near}
\begin{equation}
	\label{eq:calibrated received power}
	P_m^{\rm SA} = |f_m y_{m}^{\rm SA}|^2
\end{equation} 
As such, the subcarrier with highest calibrated received power can be estimated by $ f_{\hat{m}}= \arg \max_{f_m} P_m^{\rm SA} $.
Further, due to the beam-split effect in the spatial domain, the $ k $-th candidate user angle is given by
\begin{equation}
\label{eq:candidate angles}
	\theta_{\hat{m}}^{(k)}  = \theta^{\prime}_{\rm SA} + \frac{2(k_{\hat{m}}^{(1)}+k-1)}{U\rho_{\hat{m}}}, 
\end{equation}
where $ k = 1,2,\cdots, |\mathcal{K}_{\hat{m}}| $ and $ k_{\hat{m}}^{(1)} = \lceil\frac{f_{\rm H}}{B}\rceil $.
Next, we present the method to determine the actual user angle from the $ |\mathcal{K}_{\hat{m}}| $ candidate angles in \eqref{eq:candidate angles}.
\begin{figure}
	\centering
	\includegraphics[width = 1\columnwidth]{./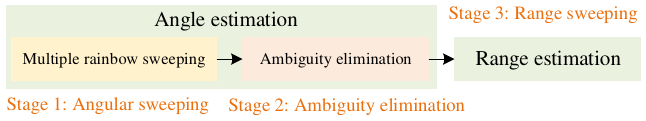}
	\caption{The framework of proposed wideband beam training method.}
	\label{fig:Algorithm}
	\vspace{-10pt}
\end{figure}
\subsubsection{Ambiguity Elimination}
\label{sec:Ambiguity Elimination}
Similar to the sparse-activation method, we activate a central subarray (dense array) consisting of $ Q $ antennas to resolve the angular ambiguity, for which the user is assumed to be located in the far-field region of the activated central subarray\footnote{As discussed earlier, the number of antennas in the activated central subarray should satisfy $ r_0 > R_{\rm Eff}^{\rm CS} = 0.367\frac{2(Q-1)^2d_{\rm c}^2}{\lambda_{\rm c}} $, and this condition is easily achievable.}.
Specifically, we select $|\mathcal{K}{\hat{m}}|$ subcarriers and use the corresponding single beams ($|\mathcal{K}{\hat{m}}|$ beams in total) formed at the selected subcarriers to cover the candidate user angles.
Then, according to the selected subcarriers with the highest calibrated received signal power, we can obtain the actual user angle.

\underline{\textbf{Channel model of the central subarray:}}
The LoS channel between the central subarray and the user at the $ m $-th subcarrier can be modeled as
\begin{equation}\label{Eq:ff-model}
	(\mathbf{h}^{\rm CS}_{m})^H = \sqrt{Q}\beta_m \mathbf{a}_m^{H}(\theta_{0}),
\end{equation}
where $ \mathbf{a}_m(\theta_{0}) $ denotes the far-field channel response vector of the activated central subarray, given by
\begin{equation}\label{far_steering}
	\left[\mathbf{a}_m^H(\theta_{0}) \right]_{q} = \frac{1}{\sqrt{Q}}e^{\jmath\frac{ 2 \pi }{\lambda_m}qd_{\rm c}\theta_{0}}, \forall q\in \mathcal{Q}.
\end{equation}
Herein, $ \mathcal{Q} \triangleq \{-\frac{Q-1}{2}, -\frac{Q-1}{2}+1, \cdots, \frac{Q-1}{2}\} $ denotes the antenna index set of the central subarray.
Moreover, the TD beamformer is given by
$$ [\mathbf{w}_m^H(\theta^{\prime}_{\rm CS})]_q = \frac{1}{\sqrt{Q}}e^{-{\jmath \frac{2 \pi}{\lambda_m} qd_{\rm c}\theta^{\prime}_{{{\rm CS}}}}}, \forall q\in \mathcal{Q}, $$
where $ \theta_{\rm CS}^{\prime} = \frac{{{q}}d_{\rm c}\tau_{{q}}}{c} $ denotes the adjustable TD parameter and $ \tau_{{q}} = \frac{{{q}}d_{\rm c}\theta_{\rm CS}^{\prime}}{c} $ represents the actual time-delay of the $ {q} $-th TD circuit in the activated central subarray.

Next, $ |\mathcal{K}_{\hat{m}}| $ single beams formed at the selected $ |\mathcal{K}_{\hat{m}}| $ subcarriers are required to be sequentially steered towards the candidate user angles.
To achieve this goal, we first analyze the array gain for the central subarray and the present the subcarrier selection mechanism. 

\underline{\textbf{Array gain for the central subarray:}}
Similar to \eqref{eq:SA array gain}, the array gain at the $ m $-th subcarrier is defined by
\begin{equation}
	\label{eq:CS array gain}
	\begin{aligned}
		f_m(\theta, \theta^{\prime}_{\rm CS}) =\left|\mathbf{w}_m^H(\theta^{\prime}_{\rm CS}) \mathbf{a}_m(\theta)\right|.
	\end{aligned}	
\end{equation}
Then, the angle of the beam formed at the $ m $-th subcarrier can be obtained as follows.
\begin{lemma}
	\label{lemma:beam angle for the central subarray}
	\emph{Given an activated central subarray with $ Q $ antennas and a TD beamformer $ \mathbf{w}_m(\theta^{\prime}_{\rm SA}) $, the steered beam angle $ \theta_{m} $ at the $ m $-th subcarrier is given by}
	\begin{equation}
		\label{eq:single beam}
		\theta_m = \theta^{\prime}_{\rm CS} + {2p_m}\frac{f_{\rm c}}{f_m},
	\end{equation}
	\emph {where $ p_{m} \in \Big\{ \mathbb{Z} ~ \cap \big\{k| \theta^{\prime}_{\rm CS} + {2k}\frac{f_{\rm c}}{f_m} \in [-1,1) \big\} \Big\} $}.
\end{lemma}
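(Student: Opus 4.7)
The plan is to follow the same argument structure used for Lemma~\ref{lemma:multi-beam}, specialized to a dense (non-sparse) subarray, which corresponds formally to taking the activation interval $U=1$. The key object to analyze is the array gain $f_m(\theta,\theta^{\prime}_{\rm CS}) = |\mathbf{w}_m^H(\theta^{\prime}_{\rm CS})\mathbf{a}_m(\theta)|$ defined in \eqref{eq:CS array gain}, which I will first reduce to a Dirichlet-kernel form and then analyze for its peak locations.

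First, I would substitute the explicit expressions for the TD beamformer and the far-field channel response vector of the central subarray into the gain definition, giving
\begin{equation*}
f_m(\theta,\theta^{\prime}_{\rm CS}) = \frac{1}{Q}\Bigl|\sum_{q\in\mathcal{Q}} e^{\jmath\frac{2\pi}{\lambda_m} q d_{\rm c}(\theta-\theta^{\prime}_{\rm CS})}\Bigr|.
\end{equation*}
Using $\frac{2\pi d_{\rm c}}{\lambda_m} = \pi\rho_m$ with $\rho_m = f_m/f_{\rm c}$ and summing the geometric series over the symmetric index set $\mathcal{Q}$ (the same step that yields $(a_1)$ in the proof of Lemma~\ref{lemma:multi-beam}), I would collapse the expression into
\begin{equation*}
f_m(\theta,\theta^{\prime}_{\rm CS}) = \frac{\sin\frac{\pi\rho_m}{2}Q(\theta-\theta^{\prime}_{\rm CS})}{Q\sin\frac{\pi\rho_m}{2}(\theta-\theta^{\prime}_{\rm CS})}.
\end{equation*}

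Next, I would locate the peaks of this kernel by demanding that the argument of the denominator sine equal an integer multiple of $\pi$, i.e., $\frac{\pi\rho_m}{2}(\theta-\theta^{\prime}_{\rm CS}) = p_m\pi$ for $p_m\in\mathbb{Z}$. Solving yields $\theta_m = \theta^{\prime}_{\rm CS} + \frac{2p_m}{\rho_m} = \theta^{\prime}_{\rm CS} + 2p_m\frac{f_{\rm c}}{f_m}$, which is precisely the claimed formula. I would then impose the physical spatial-angle constraint $\theta_m\in[-1,1)$ to restrict $p_m$ to the integer set stated in the lemma.

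The derivation is essentially a $U=1$ specialization of Lemma~\ref{lemma:multi-beam}, so no serious obstacle is expected. The only subtlety worth flagging is terminological: although the surrounding discussion refers to a \emph{single beam} at each subcarrier, the lemma itself allows several admissible $p_m$. The ``single-beam'' interpretation corresponds to the regime $\rho_m\approx 1$, in which only $p_m = 0$ lands inside $[-1,1)$; this regime is secured later by the subcarrier-selection mechanism in Section~\ref{sec:Ambiguity Elimination} rather than inside the present lemma, so the proof can simply list the full set of admissible integers and stop.
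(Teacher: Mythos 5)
Your proof is correct and is exactly the argument the paper intends: the paper omits the proof of this lemma, stating only that it is ``similar to that of Lemma~\ref{lemma:multi-beam},'' and your derivation is precisely that omitted argument carried out as the $U=1$ specialization (Dirichlet-kernel reduction of the gain, peaks where the denominator argument is an integer multiple of $\pi$, then the constraint $\theta_m\in[-1,1)$). Your closing remark that the ``single-beam'' reading is enforced later by the subcarrier selection, not by the lemma itself, is also consistent with the paper's discussion following the lemma.
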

\begin{proof}
	The proof is similar to that of Lemma \ref{lemma:multi-beam} and thus omitted for brevity.
\end{proof}
\begin{figure*}[t]
	\centering
	\vspace{-10pt}
	\subfloat[Original subcarrier selection.]{
		\includegraphics[width=0.45\textwidth]{./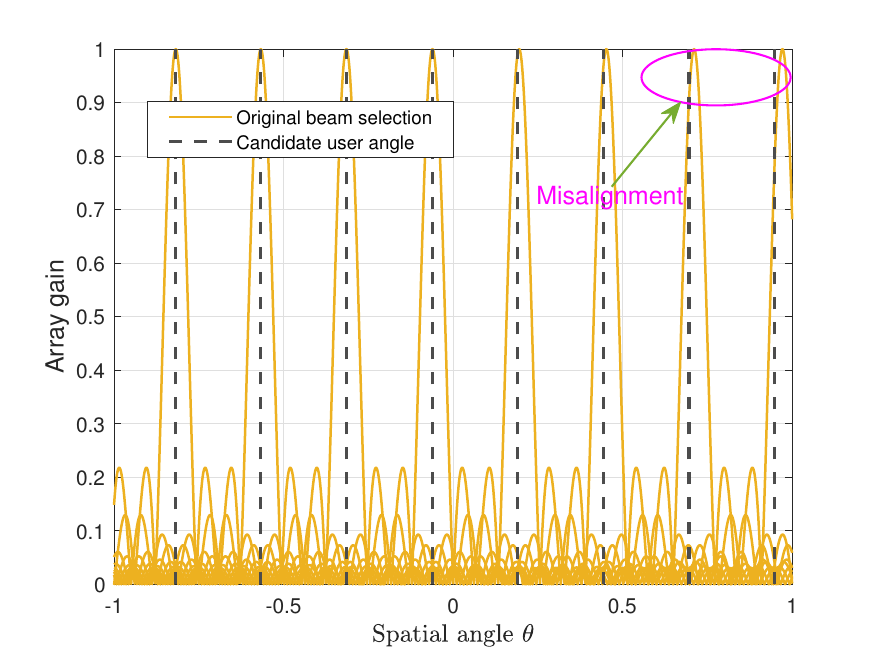}}
	\subfloat[Calibrated subcarrier selection.]{
		\includegraphics[width=0.45\textwidth]{./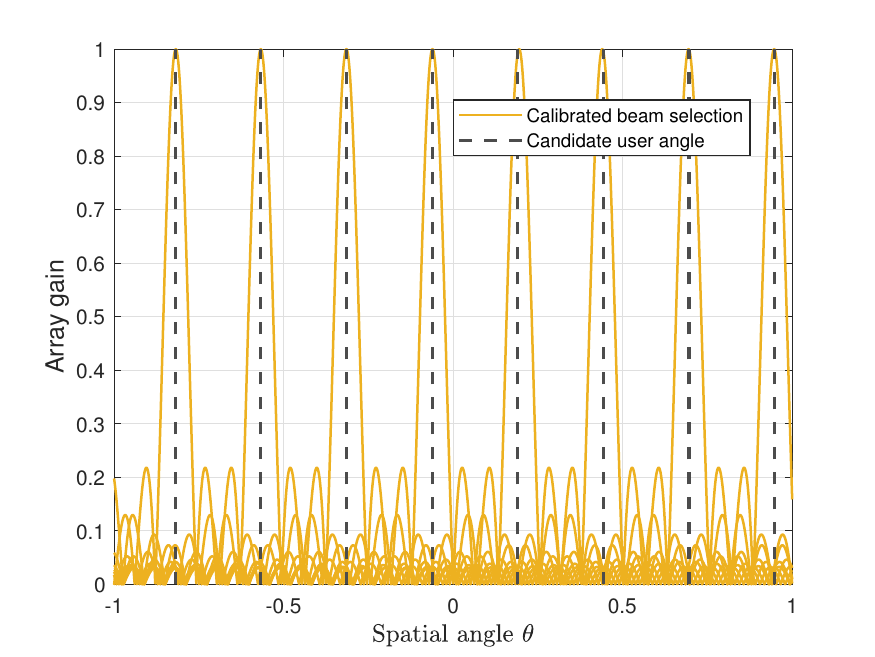}}
	\caption{Beam gains of the selected subcarriers, where $ f_{\rm c} = 60 $ GHz and $ B = 3 $ GHz, where the number of antennas of the central subarray is $ Q = 33 $.
	According to \eqref{eq:TD parameter for CS}, the TD parameter is set as $ \theta^{\prime}_{\rm CS} = -80.8199 $ with $ p = 40 $.}
	\label{fig:modify}
\end{figure*}

According to Lemma \ref{lemma:beam angle for the central subarray}, the period of the beams at the subcarriers $ f_{m_{\rm L}} \le f_{\rm c} $ is  $ \frac{2f_{\rm c}}{f_{m_{\rm L}}} > 2 $, which results in a single beam characteristic. 
In contrast, for the subcarriers $ f_{m_{\rm H}} \ge f_{\rm c} $, the beam period is $ \frac{2f_{\rm c}}{f_{m_{\rm H}}} < 2 $, potentially leading to multiple beams and introducing new angular ambiguities. 
Consequently, we only utilize the subcarriers $ f_{m_{\rm L}} \le f_{\rm c} $ to resolve the angular ambiguity introduced in the first stage.
Moreover, without loss of generality, we can set a proper TD parameter $ \theta^{\prime}_{\rm CS} $ to keep the same $ p_{m} $ for all subcarriers and the condition is given by
$$ -1 \le  \theta^{\prime}_{\rm CS} + {2p}\frac{f_{\rm c}}{f_{m_{\rm L}}} < 1, $$
where $ p_{m_{\rm L}} = p , \forall f_{m_{\rm L}} < f_{\rm c} $.
Then, the steered beam angle at the frequency $ f_{m_{\rm L}} $ can be modified as $ \theta_{m_{\rm L}} = \theta^{\prime}_{\rm CS} + {2p}\frac{f_{\rm c}}{f_{m_{\rm L}}} $.
Next, we introduce the subcarrier selection mechanism and the corresponding TD parameter value in detail.

\underline{\textbf{Subcarrier selection:}}
We first show that the single beam formed at each subcarrier is approximately uniformly spaced in the angular domain.
Then, this uniform property can be leveraged to select $ |\mathcal{K}_{\hat{m}}| $ subcarriers with appropriate frequency intervals, using their single beams to cover the candidate user angles with periodicity, thereby resolving the angular ambiguity.
\begin{lemma}[Beam angle difference]
	\label{lemma:beam angle difference}
	\emph{Given the central subcarrier frequency $f_{\rm c}$ and bandwidth $ B $ with $ M $ subcarriers, the angle difference between the single beams formed at adjacent subcarriers is given by}
	\begin{equation}
		\label{eq:beam gap}
		\Delta \theta  \approx  \frac{2p}{f_{\rm c}}\Delta f = \frac{2p}{f_{\rm c}} \frac{B}{M}.
	\end{equation}
\end{lemma}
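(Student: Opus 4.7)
\proof
The plan is to work directly from the closed-form beam-angle expression established in Lemma \ref{lemma:beam angle for the central subarray}, namely
\begin{equation}
\theta_m = \theta^{\prime}_{\rm CS} + 2p\,\frac{f_{\rm c}}{f_m},
\end{equation}
where the integer index $p$ has already been chosen in the preceding paragraph so that it is the \emph{same} for every subcarrier $f_{m_{\rm L}}\le f_{\rm c}$ of interest. Thanks to this common $p$, differencing $\theta_m$ across adjacent subcarriers is a pure one-line calculation rather than a combinatorial argument.

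The key steps are as follows. First, I would fix two adjacent subcarrier indices and write $f_{m+1}=f_m+\Delta f$ with $\Delta f = B/M$, which is just the uniform subcarrier spacing defined in Section \ref{Sec:System model}. Second, I would compute
\begin{equation}
\theta_{m+1}-\theta_m \;=\; 2p f_{\rm c}\!\left(\frac{1}{f_{m+1}}-\frac{1}{f_m}\right) \;=\; -\,\frac{2p f_{\rm c}\,\Delta f}{f_m f_{m+1}},
\end{equation}
so that in magnitude $|\Delta\theta|=\tfrac{2p f_{\rm c}}{f_m f_{m+1}}\Delta f$. Third, I would invoke the standard wideband-but-not-ultra-wideband regime $B\ll f_{\rm c}$ already used throughout the paper (e.g.\ in the derivation of \eqref{eq:nuber of multi-beama} and the remark that $\rho_m\approx 1$), which gives $f_m f_{m+1}\approx f_{\rm c}^{2}$ to leading order in $B/f_{\rm c}$. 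Substituting yields $|\Delta\theta|\approx \tfrac{2p}{f_{\rm c}}\Delta f=\tfrac{2p}{f_{\rm c}}\tfrac{B}{M}$, which is exactly \eqref{eq:beam gap}.

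There is essentially no hard step here; the only thing that requires a sentence of justification is the approximation $f_m f_{m+1}\approx f_{\rm c}^{2}$. I would note that the relative error is $O(B/f_{\rm c})$ and is therefore dominated by (and consistent with) the other approximations already standing in the paper (e.g.\ the simplification $\rho_m\approx 1$ and the approximations in the overlapping-region remark). Consequently, the lemma follows directly from Lemma \ref{lemma:beam angle for the central subarray} under the same narrow-relative-bandwidth regime used elsewhere, and the ``$\approx$'' in \eqref{eq:beam gap} captures precisely this first-order truncation. \endproof
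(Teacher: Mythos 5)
Your proposal is correct and follows essentially the same route as the paper: both start from the closed-form beam angle of Lemma \ref{lemma:beam angle for the central subarray} and invoke the narrow-relative-bandwidth condition $B\ll f_{\rm c}$ to obtain the first-order result. The only cosmetic difference is that you difference exactly and then approximate $f_m f_{m+1}\approx f_{\rm c}^2$, whereas the paper Taylor-linearizes $\theta(f)$ at $f_{\rm c}$ before differencing; these are the same approximation to leading order in $B/f_{\rm c}$.
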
 
\begin{proof}
	We can regard the single-beam angle $ \theta_{m_{\rm L}} $ in \eqref{eq:single beam} as a function of frequency $f$, given by
	$$ \theta(f) = \theta^{\prime}_{\rm CS} + {2p}\frac{f_{\rm c}}{f}. $$
	Due to $ B \ll f_{\rm c} $, $ \theta(f) $ can be approximated by the first-order Taylor expansion at the central frequency $ f_{\rm c} $, given by
	\begin{equation}
	\label{eq:Taylor approximation}
		\theta(f) \approx \theta^{\prime}_{\rm CS} + {2p} - \frac{2p}{f_{\rm c}}(f - f_{\rm c}).
	\end{equation}
	Hence, the beam angle $ \theta_{m_{\rm L}} $ formed at the $ m_{\rm L} $-th subcarrier can be approximated as
	\begin{equation}
		\label{eq:period of frequency}
		\theta_{m_{\rm L}} \approx \theta^{\prime}_{\rm CS} + {4p} - \frac{2p}{f_c}f_{m_{\rm L}}.
	\end{equation}
	Then, the beam angle difference formed at adjacent subcarriers is $ \Delta \theta  \approx \frac{2p}{f_{\rm c}}\Delta f = \frac{2p}{f_{\rm c}} \frac{B}{M} $ and thus completing the proof.
\end{proof}

From Lemma \ref{lemma:beam angle difference}, we can select $ |\mathcal{K}_{\hat{m}}| $ subcarriers with uniform frequency difference of $ \eta = \lfloor\frac{M-1}{2|\mathcal{K}_{\hat{m}}|}\rfloor $ ranging from $ f_{\rm L} $ to $ f_{\rm c} $, resulting in a beam period of $ \eta \Delta \theta  $.
In particular, the selected subcarriers are given by
\begin{equation}
	\label{eq:selected frequency}
	f^{(k)} = f_{\rm c} - (k-1)\eta\frac{B}{M}, k = 1,2,\cdots, |\mathcal{K}_{\hat{m}}|.
\end{equation}  
To ensure that the single beams formed at the selected subcarriers accurately align with the candidate user angles, the TD parameter $ \theta^{\prime}_{\rm CS} $ should be set as
\begin{equation}
	\label{eq:sigle beam angle of the central subcarrier}
	\theta_{\rm c} = \theta^{\prime}_{\rm CS} + {2p} = \theta_{\hat{m}}^{(1)},
\end{equation}
\begin{equation}
\label{eq:equal period}
	\eta \Delta \theta = \frac{M-1}{2|\mathcal{K}_{\hat{m}}|} \frac{2p}{f_{\rm c}} \frac{B}{M} = \frac{2}{U\rho_{\hat{m}}}.
\end{equation}
\begin{lemma}[A feasible solution for the TD parameter $ \theta^{\prime}_{\rm CS} $]
	\emph {Given the constraints \eqref{eq:sigle beam angle of the central subcarrier} and \eqref{eq:equal period}, a feasible solution for the TD parameter $ \theta^{\prime}_{\rm CS} $ is given by}
	\begin{equation}
		\label{eq:TD parameter for CS}
		\theta_{\rm CS}^{\prime} = \theta_{\hat{m}}^{(1)} - 2\left \lfloor {\frac{2f_{\rm c}^2}{Bf_{\hat{m}}}} + 0.5 \right \rfloor.
	\end{equation}
\end{lemma}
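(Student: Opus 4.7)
The plan is to treat constraint \eqref{eq:equal period} as the equation determining the beam index $p$, then use constraint \eqref{eq:sigle beam angle of the central subcarrier} to convert $p$ into $\theta_{\rm CS}^{\prime}$. First I would isolate $p$ in \eqref{eq:equal period}: multiplying both sides by $\frac{|\mathcal{K}_{\hat{m}}| M f_{\rm c}}{(M-1)B}$ and using $\rho_{\hat{m}}=f_{\hat{m}}/f_{\rm c}$ yields
\begin{equation}
p \;=\; \frac{2 f_{\rm c}^{2}\, |\mathcal{K}_{\hat{m}}|\, M}{B\, f_{\hat{m}}\, U\, (M-1)}. \nonumber
\end{equation}

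Next I would invoke the near-field wideband regime in which the paper operates, namely $B\ll f_{\rm c}$ (so $\rho_{\hat{m}}\approx 1$) and $M\gg 1$ (so $M/(M-1)\approx 1$), together with Lemma~\ref{lemma:number of multi-beams}, which gives $|\mathcal{K}_{\hat{m}}|\in\{\lfloor U\rho_{\hat{m}}\rfloor,\lfloor U\rho_{\hat{m}}\rfloor+1\}\approx U$. Substituting these approximations collapses the above to $p\approx \tfrac{2 f_{\rm c}^{2}}{B\, f_{\hat{m}}}$. Because $p$ indexes a beam and must be a positive integer (see Lemma~\ref{lemma:beam angle for the central subarray}), I would round to the nearest integer, writing $p = \bigl\lfloor \tfrac{2 f_{\rm c}^{2}}{B f_{\hat{m}}} + 0.5 \bigr\rfloor$, which is the standard ``round-half-down'' convention that minimizes $|\eta\Delta\theta - 2/(U\rho_{\hat{m}})|$.

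Having fixed this integer $p$, I would substitute it into constraint \eqref{eq:sigle beam angle of the central subcarrier}, i.e.\ $\theta_{\rm CS}^{\prime} + 2p = \theta_{\hat{m}}^{(1)}$, to obtain
\begin{equation}
\theta_{\rm CS}^{\prime} \;=\; \theta_{\hat{m}}^{(1)} - 2p \;=\; \theta_{\hat{m}}^{(1)} - 2\left\lfloor \frac{2 f_{\rm c}^{2}}{B f_{\hat{m}}} + 0.5\right\rfloor,\nonumber
\end{equation}
which is exactly the claimed expression. Finally I would verify feasibility by checking that with this $\theta_{\rm CS}^{\prime}$ the beam angles $\theta_{m_{\rm L}} = \theta_{\rm CS}^{\prime} + 2p\, f_{\rm c}/f_{m_{\rm L}}$ at the selected subcarriers of \eqref{eq:selected frequency} land inside $[-1,1)$ and sweep through the $|\mathcal{K}_{\hat{m}}|$ candidate angles of \eqref{eq:candidate angles}, using the first-order expansion \eqref{eq:period of frequency} from the proof of Lemma~\ref{lemma:beam angle difference}.

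The main obstacle is not the algebra but the rounding step: once $p$ is replaced by an integer, equation~\eqref{eq:equal period} holds only approximately, so I must argue that the resulting residual $\bigl|\tfrac{2f_{\rm c}^{2}}{Bf_{\hat{m}}} - p\bigr|\le \tfrac12$ translates, after multiplication by $\eta B/(Mf_{\rm c})$ and accumulation over $|\mathcal{K}_{\hat{m}}|$ subcarriers, into an angular mismatch much smaller than the half-power beam-width $\sim 2/Q$ of the dense central subarray. This is where the large-$Q$, small $B/f_{\rm c}$ assumptions are genuinely used, and it is the step I would write out most carefully to guarantee that the approximations justifying ``feasibility'' are self-consistent.
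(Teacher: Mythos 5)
Your proposal is correct and takes essentially the same route as the paper's proof: solve \eqref{eq:equal period} for $p$ (the paper implicitly uses the same $|\mathcal{K}_{\hat{m}}|\approx U$ and $M/(M-1)\approx 1$ simplifications that you state explicitly), approximate $p \approx \frac{2f_{\rm c}^2}{Bf_{\hat{m}}}$, round to the nearest integer, and then recover $\theta_{\rm CS}^{\prime}$ from \eqref{eq:sigle beam angle of the central subcarrier}. The residual angular mismatch you flag at the end is not treated inside the paper's lemma proof either; it is deferred to the subcarrier-selection calibration step in \eqref{eq:modified subcarriers}.
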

\begin{proof}
	By substituting $ 2p = \theta_{\hat{m}}^{(1)} - \theta^{\prime}_{\rm CS} $ into \eqref{eq:equal period}, we have 
	$$ \theta_{\rm CS}^{\prime} = \theta_{\hat{m}}^{(1)} - \frac{4f_{\rm c}^2}{Bf_{\hat{m}}}, $$
	where $ p = {\frac{2f_{\rm c}^2}{Bf_{\hat{m}}}}$.
	Considering that $ p $ is an integer, a feasible solution for $ \theta_{\rm CS}^{\prime} $ can be approximated by $ \theta_{\rm CS}^{\prime} = \theta_{\hat{m}}^{(1)} - 2\left \lfloor {\frac{2f_{\rm c}^2}{Bf_{\hat{m}}}} + 0.5 \right \rfloor $ and thus we have completed the proof.
\end{proof}
\begin{remark}[Subcarrier selection calibration]
	\emph{Due to the first-order Taylor approximation in \eqref{eq:Taylor approximation}, some of the single beams formed at the selected subcarriers may slightly deviate from the candidate user angles.
	Given the approximate TD parameter $ \theta_{\rm CS}^{\prime} $ in \eqref{eq:TD parameter for CS}, we need to calibrate the subcarrier selection in \eqref{eq:selected frequency} and the calibrated subcarriers are given by}
	\begin{equation}
	\label{eq:modified subcarriers}
		f^{(k)} = \arg \min_{f_{m_{\rm L}}} \Big|\theta_{\hat{m}}^{(k)}-\big(\theta^{\prime}_{\rm CS} + {2p}\frac{f_{\rm c}}{f_{m_{\rm L}}}\big)\Big|,
	\end{equation}
	\emph{where $ p = \left \lfloor {\frac{2f_{\rm c}^2}{Bf_{\hat{m}}}} + 0.5 \right \rfloor $.}
\end{remark}
\begin{example}
	\emph{In Fig. \ref{fig:modify}, we plot the array gains of the selected subcarriers.
	The BS parameters are the same as that in Example 1.
	The optimal subcarrier obtained by the angle sweeping in the first stage is assumed to be $ f_{300} = 59.3774 $ GHz, corresponding to the candidate user angles $ -0.8199 + 0.2526k, k =1,2,\cdots,8 $.
	The subcarriers selected in \eqref{eq:selected frequency} are $ f^{(k)} = 30 - 0.1875k$ GHz.
	Fig. \ref{fig:modify}(a) shows that the beams formed at $ f^{(7)} $ and $ f^{(8)} $ without calibration deviate from the candidate user angles $ \theta_{\hat{m}}^{(7)} $ and $ \theta_{\hat{m}}^{(8)} $, which may degrade the wideband beam training performance.
	Fortunately, we can refine the selected frequencies via \eqref{eq:modified subcarriers}, for which $ f^{(6)} $, $ f^{(7)} $ and $ f^{(8)} $ are calibrated as $ 59.0698 $, $ 58.8853 $ and $ 58.7036 $ GHZ, respectively.
	It is observed from Fig. \ref{fig:modify}(b) that the single beams formed at the calibrated subcarriers precisely align with the candidate user angles, which improves the angle estimation accuracy.}
\end{example}

\underline{\textbf{User angle determination:}}
Given the TD parameter $ \theta_{\rm CS}^{\prime} $ in \eqref{eq:TD parameter for CS} and selected subcarriers in \eqref{eq:modified subcarriers}, the received signal at the $ k $-th selected subcarrier is given by
$$ y_{k}^{\rm CS} = \sqrt{P_t} (\mathbf{h}_k^{\rm SA})^H \mathbf{w}_k\left(\theta_{\rm CS}^{\prime}\right) x_k+n_k. $$
Similar to \eqref{eq:calibrated received power}, the selected subcarrier with highest calibrated power can be estimated as
\begin{equation}
	f^{(\hat{k})} = \arg \max_{f^{(k)}} |f^{(k)} y_{k}^{\rm CS}|^2,
\end{equation}
which corresponds to the estimated user angle, given by
\begin{center}
	\begin{framed}
		{\setlength\abovedisplayskip{0pt}
			\setlength\belowdisplayskip{0pt}
			\begin{equation}
				\theta^{\ast} = \theta^{\prime}_{\rm CS} + {2p}\frac{f_{\rm c}}{f^{(\hat{k})}}.
			\end{equation} 
	}\end{framed}
\end{center}

As such, in the third stage, we only need to scan the range domain in the estimated angle $ \theta^{\ast} $, which significantly reduces the near-field beam training overhead. 

\subsection{Range Estimation}
In this stage, we activate the entire XL-array antennas and control beams at all subcarriers to be focused on specific locations in the estimated user angle $ \theta^{\ast} $.
Specifically, we only need one pilot to achieve the range sweeping by exploiting the wideband beam-split effect, instead of the exhaustive search in the range domain~\cite{cui2022near}. 
\subsubsection{Array gain of the entire XL-array}
The array gain for the entire XL-array is given by
\begin{equation}
	\label{eq:XL-array array gain}
	\begin{aligned}
		f_m^{\rm AG} =\left|\mathbf{w}_m^H(\theta^{\prime},\mu^{\prime},\theta^{\prime}_p,\mu_p^{\prime}) \mathbf{b}_m(\theta, \mu)\right|.
	\end{aligned}	
\end{equation}
\begin{figure}
	\centering
	\vspace{-14pt}
	\includegraphics[width = 0.85\columnwidth]{./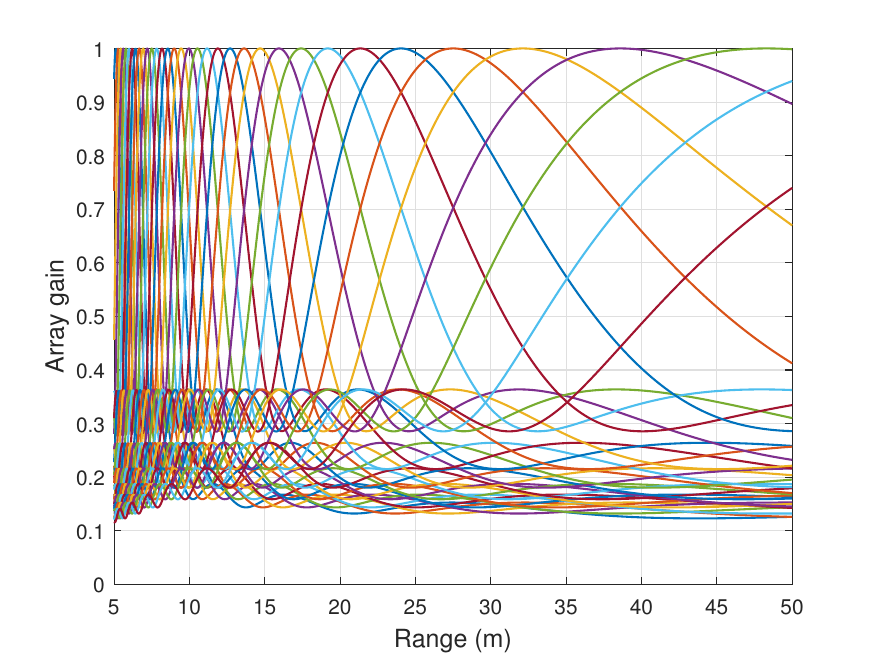}
	\caption{Beam coverage in the range domain, where $ f_{\rm c} = 60 $ GHz, $ B = 3 $ GHz and $ M = 1024 $. Moreover, the TD parameters are set as $ \theta^{\prime} = 0 $ and $ \mu^{\prime} = -1.8 $, while the PS parameters are $ \theta^{\prime}_p = 0 $ and $ \mu^{\prime}_p = 1.85 $. For clarity, we only present the subcarriers at frequency $ f_{1:30:1021} $.}
	\vspace{-10pt}
	\label{fig:RangeCoverage}
\end{figure}
Then, the beam focusing point can be obtained as follows.
\begin{lemma}[Beam focusing point]
	\label{lemma:beam focusing location}
	\emph{Given a TD-PS beamformer $ \mathbf{w}_m(\theta^{\prime},\mu^{\prime},\theta^{\prime}_p,\mu_p^{\prime}) $, the beam focusing point $ (\theta_{m}, \mu_m) $ at the frequency $ f_m $ can be obtained by}
	\begin{equation}
	\label{eq:focused angle}
		\theta_m = \theta^{\prime} + ({2q_m} + \theta_p^{\prime})\frac{f_{\rm c}}{f_m},
	\end{equation}
	\begin{equation}
	\label{eq:focused range}
		\mu_m = \mu^{\prime} + (\frac{{2s_m}}{d_{\rm c}}+\mu_p^{\prime})\frac{f_{\rm c}}{f_m},
	\end{equation}
	\emph {where $ q_{m} \in \big\{ \mathbb{Z} ~ \cap \big\{k| \theta^{\prime}_{\rm CS} + ({2k+\theta_p^{\prime}})\frac{f_{\rm c}}{f_m} \in [-1,1) \big\} \big\} $ and $ s_{m} \in \big\{ \mathbb{Z} ~ \cap \big\{k| \mu^{\prime} + {(2k+\mu_p^{\prime})}\frac{f_{\rm c}}{d_{\rm c}f_m} > 0 \big\} \big\} $. } 
\end{lemma}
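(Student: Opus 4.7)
The plan is to follow the same template as the proofs of Lemma~\ref{lemma:multi-beam} and Lemma~\ref{lemma:beam angle for the central subarray}, but now tracking both the linear-in-$n$ and quadratic-in-$n$ phase terms produced by the near-field TD--PS beamformer on the full aperture. First I would substitute the definitions of $\mathbf{w}_m(\theta',\mu')$, $\mathbf{w}_{\mathrm{PS}}(\theta_p',\mu_p')$ and the near-field response vector $\mathbf{b}_m(\theta,\mu)$ (using the Fresnel form in \eqref{eq:Fresnel approximation}) into the array-gain definition \eqref{eq:XL-array array gain}. Because $\mathbf{w}_m\odot\mathbf{w}_{\mathrm{PS}}$ is elementwise, the Hadamard product merely adds phases, so after gathering terms the array gain collapses to
\begin{equation*}
f_m^{\rm AG} \;=\; \frac{1}{N}\Bigl|\sum_{n\in\mathcal{N}} e^{\,j2\pi(\alpha_m n + \beta_m n^2)}\Bigr|,
\end{equation*}
where $\alpha_m$ is the net linear-phase coefficient coming from the TD, PS and observation angles, proportional to $(d_c/c)\bigl[f_m(\theta-\theta') - f_c\theta_p'\bigr]$, and $\beta_m$ is the net quadratic-phase coefficient built from the range parameters, proportional to $(d_c^2/c)\bigl[f_m(\mu'-\mu) - f_c\mu_p'\bigr]$, up to the sign conventions already fixed in the earlier lemma proofs.

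Next I would invoke the phase-coherence criterion: the modulus of the above sum equals $N$ (so $f_m^{\rm AG}=1$) precisely when every summand has the same phase, and for $n$ ranging over integers symmetric about $0$ this forces both $\alpha_m\in\mathbb{Z}$ and $\beta_m\in\mathbb{Z}$. Writing these integers as $q_m$ and $s_m$, the linear equation $\alpha_m=q_m$ solved for $\theta$ using $d_c=\lambda_c/2=c/(2f_c)$ yields $\theta_m=\theta'+(2q_m+\theta_p')f_c/f_m$, matching \eqref{eq:focused angle}; the quadratic equation $\beta_m=s_m$ solved for $\mu$ yields $\mu_m=\mu'+(2s_m/d_c+\mu_p')f_c/f_m$, matching \eqref{eq:focused range}. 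The admissible index sets for $q_m$ and $s_m$ are then imposed by the physical constraints $\theta_m\in[-1,1)$ (a spatial angle) and $\mu_m>0$ (since $\mu=(1-\theta^2)/(2r)>0$ by definition), recovering exactly the sets stated in the lemma.

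The main obstacle, compared with the far-field Lemmas~\ref{lemma:multi-beam} and \ref{lemma:beam angle for the central subarray}, is that the quadratic-in-$n$ phase prevents the sum from collapsing to a Dirichlet-kernel closed form, so one cannot simply read off the peaks from a geometric series. One must argue maximality directly via phase coherence across every integer $n$ in the aperture, and make the slightly non-obvious observation that the quadratic coefficient may take \emph{any} integer value $s_m$, not only $s_m=0$: this grating-type mechanism in the range domain is precisely what allows a single TD--PS setting to generate a whole family of focusing points $(\theta_m,\mu_m)$ across the subcarriers, and is what the third training stage will later exploit to sweep the range dimension within one pilot. A secondary care point is the sign bookkeeping when combining $\mathbf{w}_m^H$ with $\mathbf{w}_{\mathrm{PS}}$ and $\mathbf{b}_m$, which must follow the conjugation conventions used in the proof of Lemma~\ref{lemma:multi-beam} so that the signs in front of $\theta_p'$ and $\mu_p'$ in \eqref{eq:focused angle}--\eqref{eq:focused range} come out correctly.
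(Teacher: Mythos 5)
Your route is essentially the paper's: both substitute the TD--PS beamformer and the Fresnel-form response into \eqref{eq:XL-array array gain}, reduce the gain to a chirp-type sum $\frac{1}{N}\big|\sum_{n\in\mathcal{N}} e^{\jmath 2\pi(\alpha_m n+\beta_m n^2)}\big|$, locate the unit-gain points, solve the linear and quadratic phase conditions for $\theta$ and $\mu$, and impose $\theta_m\in[-1,1)$ and $\mu_m>0$ to get the index sets. The paper packages the peak-location step as periodicity of $F(x,y)$ with period $\big(\frac{2\pi}{d_{\rm c}},\frac{2\pi}{d_{\rm c}^2}\big)$ together with $F(0,0)=1=\max F$, whereas you argue full phase coherence across the aperture directly; these are the same mechanism, so no genuinely different route is involved.

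One step of yours, as stated, is wrong: coherence over a symmetric integer aperture does \emph{not} force $\alpha_m\in\mathbb{Z}$ and $\beta_m\in\mathbb{Z}$. Taking $\alpha_m=\beta_m=\tfrac{1}{2}$ gives phase $2\pi\big(\tfrac{n+n^2}{2}\big)=\pi n(n+1)$, a multiple of $2\pi$ for every integer $n$ since $n(n+1)$ is even; so jointly half-odd-integer pairs also achieve unit gain (equivalently, $F$ also attains one on the lattice shifted by $\big(\frac{\pi}{d_{\rm c}},\frac{\pi}{d_{\rm c}^2}\big)$). This does not sink your proof of the lemma, because the statement---and the paper's own argument, which is purely a sufficiency argument via periodicity and maximality at the origin---only needs the direction ``integer $(q_m,s_m)$ implies gain one,'' from which \eqref{eq:focused angle}--\eqref{eq:focused range} follow; but you should drop the necessity claim or prove the correct, larger characterization. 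A smaller slip: your stated $\beta_m\propto (d_{\rm c}^2/c)\big[f_m(\mu'-\mu)-f_{\rm c}\mu_p'\big]$ has the wrong sign on the $\mu_p'$ term (with that ordering the conjugation pattern yields $+f_{\rm c}\mu_p'$); you flag the sign bookkeeping yourself and your final formula matches \eqref{eq:focused range}, so this is only a sketch-level inconsistency.
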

\begin{proof}
	The array gain at the $ m $-th subcarrier in \eqref{eq:XL-array array gain} can be obtained as
	{\small
	\begin{align}
		f_m^{\rm AG} &=\frac{1}{N}\Big|\sum_{n \in \mathcal{N}} e^{\jmath n d_{\rm c}\big(\frac{2\pi}{\lambda_m}( \theta- \theta^{\prime})-\frac{2\pi}{\lambda_{\rm c}}\theta_p^{\prime} \big)-\jmath n^2 d_{\rm c}^2\big(\frac{2\pi}{\lambda_m}\left(\mu-\mu^{\prime}\right)-\frac{2\pi}{\lambda_{\rm c}}\mu_p^{\prime}\big)}\Big| \notag \\
		&= F \big(\frac{2\pi}{\lambda_m}(\theta -\theta^{\prime})-\frac{2\pi}{\lambda_{\rm c}}\theta_p^{\prime}, \frac{2\pi}{\lambda_m}(\mu -\mu^{\prime})-\frac{2\pi}{\lambda_{\rm c}}\mu_p^{\prime}\big),
	\end{align}}where $ F(x,y) \triangleq \frac{1}{N}\Big|\sum\limits_{n \in \mathcal{N}} e^{j n d_{\rm c} x-j n^2 d_{\rm c}^2 y}\Big| $. Moreover, it can be observed that $ F(x,y) $ is a periodic function with a period $ (\frac{2 \pi}{d_{\rm c}}, \frac{2 \pi}{d_{\rm c}^2}) $.
	Mathematically, we have $ F(x,y) = F(x-\frac{2q_m\pi}{d_{\rm c}},y-\frac{2 s\pi}{d_{\rm c}^2}) $ with $ q, s \in \mathbb{Z} $.	
	Given that $ f(\frac{2q\pi}{d_{\rm c}}, \frac{2s\pi}{d_{\rm c}^2}) = 1$ and $ \max F(x,y) = 1$, the beam focusing locations can be obtained by setting $ \frac{2\pi}{\lambda_m}(\theta -\theta^{\prime}) -\frac{2\pi}{\lambda_c}\theta_p^{\prime} = \frac{2q\pi}{d_{\rm c}} $ and $ \frac{2\pi}{\lambda_m}(\mu -\mu^{\prime})-\frac{2\pi}{\lambda_{\rm c}}\mu_p^{\prime} = \frac{2s\pi}{d_{\rm c}^2} $. As a consequence, the results are shown in \eqref{eq:focused angle} and \eqref{eq:focused range}. We thus complete the proof.
\end{proof}

Lemma \ref{lemma:beam focusing location} shows that when we set $ \theta^{\prime} = \theta^\ast $ and $ \theta^{\prime}_p = 0 $, the beams formed at all the subcarriers will be focused on the same angle $ \theta^\ast $ (similar in Section \ref{sec:normal TD parametr}), which is our desired TD-PS angle parameters.
Moreover, considering that $ \mu_m \ge \frac{\lambda_{\rm c}}{2D^2} $ and the beam period in the range domain is $ \frac{2f_{\rm c}}{f_md_{\rm c}} \gg \frac{\lambda_{\rm c}}{2D^2} $, only one beam will be formed at the $ m $-th subcarrier in the range domain.
Hence, without loss of generality, we can assume that all subcarriers share the same $ s_m = 0 $ for the TD parameter $ \mu^{\prime} $, satisfying $ \mu^{\prime} + {\mu_p^{\prime}}\frac{f_{\rm c}}{f_m} > 0 $.
\begin{remark}[Using only TD circuits]
	\emph{When the PSs are turned off, the beams of all subcarriers can not be guaranteed to focus within the desired range regime $[\mu_{\min},\mu_{\max}]$. In this case, the beam formed at the frequency $f_m$ is focused at $\mu_m = \mu^{\prime} + \frac{{2s_m}}{d_{\rm c}}\frac{f_{\rm c}}{f_m}$. By using the first-order Taylor expansion (similar to Lemma \ref{lemma:beam angle difference}), the focused range of the beam can be approximated as $\mu_m = \mu^{\prime}+\frac{4s_m}{d_c}-\frac{f_m}{f_c}\frac{2s_m}{d_c}$. As such, the beam range difference between the beams formed at adjacent subcarriers is $\Delta_{\mu} \approx \frac{4s_mB}{cM}$. When the bandwidth of the divided subcarriers (i.e., $\frac{B}{M}$) is large, $\Delta_{\mu}$ becomes too large, rendering only a small number of beams being focused within the desired rang regime $[\mu_{\min},\mu_{\max}]$. Considering a system setup where $B=3$ GHz and $M = 1024$ with users distributed within $[10,50]$ m, we obtain $|\Delta_{\mu}| \approx |\frac{4s_mB}{cM}|\ge \frac{4B}{cM} \approx 0.04$, while the desired range regime is $[\mu_{\min} = 0, \mu_{\min} = 0.1]$. Hence, only $\frac{\Delta_{\mu}}{\Delta_{\mu_0}} = \frac{0.1-0}{0.04} \approx 3$ beams can be effectively focused within the desired range regime, as illustrated in Fig.~\ref{fig:defocus}.}
\end{remark}
\begin{figure}
	\centering
	\vspace{-14pt}
	\includegraphics[width = 0.85\columnwidth]{./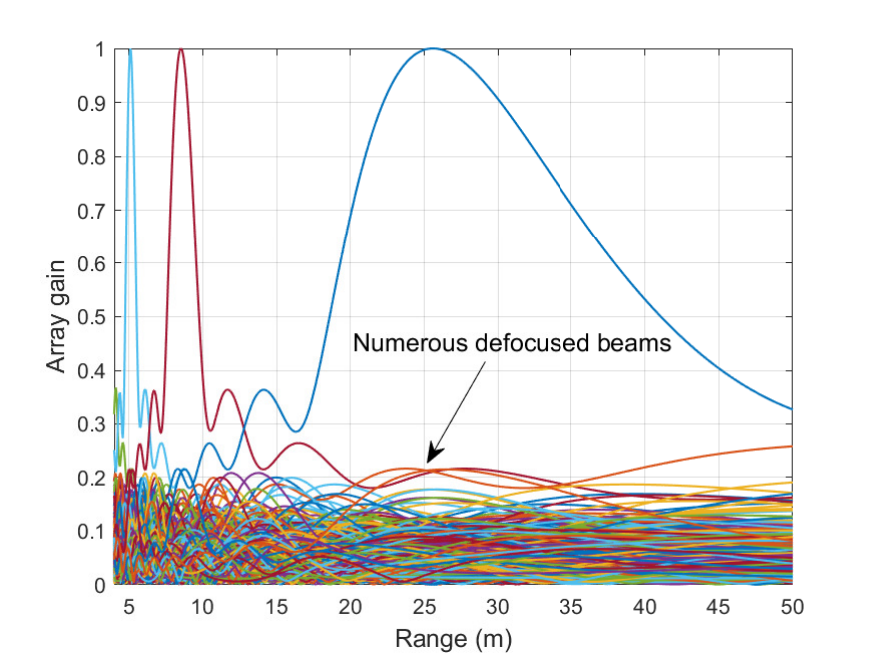}
	\caption{Beam coverage in the range domain with PSs turned off, where $ f_{\rm c} = 60 $ GHz, $ B = 3 $ GHz and $ M = 1024 $. The TD parameters are set as $ \theta^{\prime} = 0 $ and $ \mu^{\prime} = -\frac{2}{d_c} $.}
	\vspace{-10pt}
	\label{fig:defocus}
\end{figure}
\subsubsection{Subcarrier control in the range domain}
In fact, we require to control the subcarriers to be focused within a certain range regime $ [r_{\min}, r_{\max}] $, corresponding to the distance-ring range $ [\mu_{\min}, \mu_{\max}] $ with $ \mu_{\min} = \frac{\theta^{\ast}}{2r_{\max}} $ and $ \mu_{\max} = \frac{\theta^{\ast}}{2r_{\min}} $. 
To cover the range regime $ [r_{\min}, r_{\max}] $, we first fix the beam at the central subcarrier, which is focused on $ (\theta^{\ast}, \bar{\mu}) $, where $ \bar{\mu} = \frac{\mu_{\max}+\mu_{\min}}{2} $.
Mathematically, we have 
$$ \mu^{\prime} + \mu^{\prime}_p = \bar{\mu}. $$

As such, to cover the desired range regime, the following two conditions should be satisfied
\begin{equation}
\label{eq:left condition}
	\mu^{\prime} + \mu^{\prime}_p\frac{f_{\rm c}}{f_{\rm H}} \le \mu_{\min},
\end{equation}
\begin{equation}
\label{eq:right condition}
	\mu^{\prime} + \mu^{\prime}_p\frac{f_{\rm c}}{f_{\rm L}} \ge \mu_{\max}.
\end{equation}

A feasible solution that satisfies the above conditions for the TD-PS parameters $ \mu^{\prime} $ and $ \mu^{\prime}_p $ is given as follows.
\begin{lemma}[A feasible solution for TD parameter $ \mu^{\prime} $]
	\label{lemma:solution to mu}
	\emph {Given conditions \eqref{eq:left condition} and \eqref{eq:right condition}, a feasible solution for the TD-PS parameters $ \mu^{\prime} $ and $ \mu^{\prime}_p $ is given by}
	\begin{equation}
	\label{eq:solution to mu_p}
		\mu^{\prime}_p = \mu_{\rm th} \triangleq \max\Big\{\frac{\rho_{\rm L} (\mu_{\max}- \bar{\mu})}{1-\rho_{\rm L}}, \frac{\rho_{\rm H}(\bar{\mu} - \mu_{\min})}{\rho_{\rm H} - 1}\Big\}, 
	\end{equation}
	\begin{equation}
		\label{eq:solution to mu}
		\mu^{\prime} = \bar{\mu} - \mu^{\prime}_p. 
	\end{equation} 
\end{lemma}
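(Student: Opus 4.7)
The plan is to substitute the central-subcarrier equality constraint $\mu^{\prime}+\mu^{\prime}_p=\bar{\mu}$ (used to fix the beam at the midpoint $\bar{\mu}$ of the desired distance-ring interval) into the two inequality conditions \eqref{eq:left condition} and \eqref{eq:right condition}, thereby eliminating $\mu^{\prime}$ and reducing the feasibility question to a one-dimensional problem in $\mu^{\prime}_p$ alone. Then I would solve each of the two resulting inequalities for a lower bound on $\mu^{\prime}_p$ and show that taking $\mu^{\prime}_p$ equal to the larger of the two bounds yields the formula in \eqref{eq:solution to mu_p}; the corresponding $\mu^{\prime}=\bar{\mu}-\mu^{\prime}_p$ in \eqref{eq:solution to mu} then follows automatically.

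Concretely, for the high-frequency constraint \eqref{eq:left condition}, substituting $\mu^{\prime}=\bar{\mu}-\mu^{\prime}_p$ and collecting the $\mu^{\prime}_p$ terms gives $\bar{\mu}-\mu^{\prime}_p(1-f_{\rm c}/f_{\rm H})\le \mu_{\min}$, which, upon dividing by $(f_{\rm H}-f_{\rm c})/f_{\rm H}=(\rho_{\rm H}-1)/\rho_{\rm H}>0$ (so the inequality direction is preserved), rearranges to $\mu^{\prime}_p \ge \rho_{\rm H}(\bar{\mu}-\mu_{\min})/(\rho_{\rm H}-1)$. A symmetric manipulation of the low-frequency constraint \eqref{eq:right condition} yields $\mu^{\prime}_p \ge \rho_{\rm L}(\mu_{\max}-\bar{\mu})/(1-\rho_{\rm L})$, again with positive denominator since $\rho_{\rm L}<1$. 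Both right-hand sides are non-negative because $\bar{\mu}$ is the midpoint of $[\mu_{\min},\mu_{\max}]$, so both lower bounds are compatible, and choosing $\mu^{\prime}_p=\mu_{\rm th}$ as the maximum of the two bounds simultaneously satisfies both constraints with equality at the binding one.

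The main obstacle, which is really only a bookkeeping issue rather than a conceptual one, is tracking the sign of the factors $(1-f_{\rm c}/f_{\rm H})$ and $(f_{\rm c}/f_{\rm L}-1)$ when moving terms across the inequalities, since a sign error would flip the direction of each bound and produce an infeasible solution. This is resolved cleanly by using $\rho_{\rm H}>1>\rho_{\rm L}$ throughout. I would also briefly remark that the choice $\mu^{\prime}_p=\mu_{\rm th}$ is in fact the \emph{minimum} feasible value, which is desirable since a smaller $\mu^{\prime}_p$ produces a more concentrated range sweep around $\bar\mu$ while still guaranteeing coverage of $[\mu_{\min},\mu_{\max}]$, thus avoiding unnecessary defocusing at the extremal subcarriers.
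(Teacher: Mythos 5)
Your proposal is correct and follows essentially the same route as the paper's own proof: substitute $\mu^{\prime}=\bar{\mu}-\mu^{\prime}_p$ into \eqref{eq:left condition} and \eqref{eq:right condition}, derive the two lower bounds $\mu^{\prime}_p \ge \frac{\rho_{\rm H}(\bar{\mu}-\mu_{\min})}{\rho_{\rm H}-1}$ and $\mu^{\prime}_p \ge \frac{\rho_{\rm L}(\mu_{\max}-\bar{\mu})}{1-\rho_{\rm L}}$, and take $\mu^{\prime}_p=\mu_{\rm th}$ as their maximum. Your added sign-tracking via $\rho_{\rm H}>1>\rho_{\rm L}$ and the observation that $\mu_{\rm th}$ is the minimum feasible value are fine but do not change the argument.
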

\begin{proof}
	By substituting $ \mu^{\prime} = \bar{\mu} - \mu_p^{\prime} $ into \eqref{eq:left condition} and \eqref{eq:right condition}, we have
	\begin{equation}
	\label{eq:simplified mu 1}
		\mu_p^{\prime} \ge \frac{\rho_{\rm H} (\bar{\mu} - \mu_{\min})}{\rho_{\rm H} - 1},
	\end{equation}
	\begin{equation}
	\label{eq:simplified mu 2}
		\mu_p^{\prime} \ge \frac{\rho_{\rm L} (\mu_{\max}- \bar{\mu})}{1-\rho_{\rm L}}.
	\end{equation}
	Combining \eqref{eq:simplified mu 1} and \eqref{eq:simplified mu 2}, we have 
	\begin{equation}
	\label{eq:s_th condition}
		\mu_p^{\prime} \ge \mu_{\rm th} \triangleq \max\Big\{\frac{\rho_{\rm L} (\mu_{\max}- \bar{\mu})}{1-\rho_{\rm L}}, \frac{\rho_{\rm H}(\bar{\mu} - \mu_{\min})}{\rho_{\rm H} - 1}\Big\}.
	\end{equation}
	
	Hence, the PS parameter $ \mu^{\prime}_p $ can be set as $ \mu^{\prime}_p = \mu_{\rm th} $. Then, the TD parameter $ \mu^{\prime} $ is given by $ \mu^{\prime} =\bar{\mu} - \mu^{\prime}_p $ and thus completing the proof.
\end{proof}

We plot the beam pattern of a part of the subcarriers in the range domain in Fig. \ref{fig:RangeCoverage}.
The number of antennas of the entire XL-array is set by $ N = 513 $.
The other BS parameters are the same as that in Example 1.
For clarity, we only plot the beams formed at the subcarriers $ f_{1:30:1021} $.
The TD parameters are set by $ \theta^{\prime} = 0 $ and $ \mu^{\prime} = -1.7918 $, while the PS parameters are $ \theta^{\prime}_p = 0 $ and $ \mu^{\prime}_p = 1.8468 $.
In Fig. \ref{fig:RangeCoverage}, it can be observed that the beams formed at the above subcarriers are focused within the desired range $ [10, 50] $ m.
In fact, our average range estimation accuracy can reach $ \frac{M}{\Delta_r} $, where $ \Delta_r = r_{\max} - r_{\min} $.
However, since the subcarrier coverage density decreases as the distance increases (as shown in Fig. \ref{fig:RangeCoverage}), this resolution can not be achieved when users are uniformly distributed.
\subsubsection{User range determination}
Given the PS parameters ($ \theta_p^{\prime} = 0 $, $ \mu_p^{\prime} = \mu_{\rm th} $) and TD parameters ($ \theta^{\prime} = \theta^{\ast} $, $ \mu^{\prime} = \bar{\mu} - \mu_{\rm th} $), the received signal at the $ m $-th subcarrier can be represented by
$$ y_{m} = \sqrt{P_t} \mathbf{h}_m^H \mathbf{w}_m(\theta^{\ast}, \bar{\mu} - \mu_{\rm th}; 0, \mu_{\rm th}) x_m + n_m. $$

Similar to \eqref{eq:calibrated received power}, the subcarrier with the highest calibrated received power is given by
$$ f_{{m}^{\ast}} = \arg \max _{f_m} |f_m y_{m}|^2. $$
Herein, the corresponding estimated user range can be obtained as
\begin{center}
	\begin{framed}
		{\setlength\abovedisplayskip{0pt}
			\setlength\belowdisplayskip{0pt}
			\begin{equation}
				r^{\ast} = \frac{\theta^{\ast}}{2(\mu^{\prime}+\mu_{\rm th}\frac{f_{\rm c}}{f_{{m}^{\ast}}})}.
			\end{equation} 
	}\end{framed}
\end{center}
\subsection{Extensions and Discussions}
\underline{\textbf{Beam training overhead:}}
Since the mmWave and THz bands provide sufficient bandwidth, it is easy to achieve that the number of subcarriers exceeds the number of antennas.
Hence, only one pilot can be used to complete the beam sweeping in the first stage with a beam training overhead of $ T^{(1)} = 1 $.
Moreover, the beam training overhead in the second and third stages is given by $ T^{(2)} = 1 $ and $ T^{(3)} = 1 $.
As such, the overall beam training overhead is $ T = 3 $, which is ultra-low as compared with the benchmark schemes in Section \ref{Sec:benchmark schemes}, such as the exhaustive-search based beam training method and the near-field rainbow based wideband beam training scheme.
Indeed, in the first stage, we can trade the estimation accuracy for a reduction in the spectrum resources.
Specifically, by utilizing only $ \frac{1}{U} $ of the subcarriers, it is possible to achieve the same level of estimation accuracy as the non-sparse activation method such as {the near-field rainbow based beam training in~\cite{cui2022near}.
\begin{remark}[The impacts of the number of antennas]
	\emph{First, given the constraints of limited spectral resources, increasing the number of antennas in the XL-array reduces the range estimation accuracy of the proposed wideband beam training scheme. This degradation occurs because a larger array aperture substantially extends the Rayleigh distance, thereby enlarging the candidate coverage area. A feasible compensation strategy is to increase the system bandwidth (i.e., the number of subcarriers), thereby enhancing the beam coverage density within the user distribution region.
	Second, the number of antennas in the central subarray should be carefully designed. On one hand, an excessive number of antennas in the subarray can lead to the energy-spread effect, which degrades the accuracy of beam training. On the other hand, an insufficient number of antennas results in an overly broad beam-width, potentially covering two adjacent candidate user angles and introducing new angular ambiguities. Specifically, the number of antennas in the central subarray should satisfy the condition $U \le Q \le \sqrt{\frac{r_{\min}}{0.367d_c}}$, as detailed in \cite{zhou2024nearsparseDFT}.
	Third, a fundamental trade-off exists between the number of antennas in the activated S-ULA and the accuracy of angle estimation. Increasing the activation interval (i.e., reducing the number of activated antennas) leads to a higher number of rainbow blocks, thereby improving the angle estimation accuracy. However, this also reduces the array gain, making the proposed wideband beam training method more susceptible to noise and ultimately degrading angle estimation accuracy. A viable mitigation approach is to increase the transmit power of the BS.}
\end{remark}
\vspace{-10pt}
\begin{remark}[Estimation error] \label{remark:estimation_error}
	\emph{The estimation error in the proposed wideband beam training mainly arises from two effects.
	On one hand, the estimation accuracy of the proposed wideband beam training method is limited by the spectrum resources.
	On the other hand, as compared with the dense array, the proposed sparse-activation method sacrifices part of the array gain, making the proposed algorithm more sensitive to noise.
	Hence, increasing the transmit power and the number of subcarriers can effectively enhance the estimation accuracy of the proposed wideband beam training method.}
\end{remark}
\begin{remark}[Extension to the far-field scenario]
	\label{re:Far-field extension-scheme1}
	{\rm For the far-field scenario, we only need to perform the first two stages without the third stage. 
	Herein, the PSs are redundant since the TD beamforming architecture is sufficient for the beam control in the angular domain, which reduces the hardware cost and energy consumption in the far-field scenarios.
	Specifically, in the first stage, we sparsely activate the entire array instead of a central subarray, while the entire array can be activated in the second stage to eliminate the angular ambiguity, since the user is located in the far-field region of the entire array.
	As such, the TD parameters in the first and second stage can be similarly set as \eqref{eq:solution} and \eqref{eq:TD parameter for CS}, respectively.}
\end{remark}
\begin{remark}[Extension to multi-path channels]
	\label{re:multi-path channel}
	{\rm The proposed super-resolution wideband beam training method can also be applied to the general multi-path scenario.
	Specifically, in the first stage, the user can feed back $L$ indices of subcarriers corresponding to the $L$ highest received power signals, where $L$ represents the number of multi-path components.
	Then, we can successively resolve the angular ambiguity for each path in the second stage and this procedure requires $ L $ pilots. 
	Finally, the range sweeping is performed at the estimated angle for each of the $ L $ paths, which needs $ L $ beam training pilots.
	Overall, when our proposed wideband beam training method is applied to the multi-path channel, the beam training overhead becomes $ T = 1+2L $.
	Fortunately, in the mmWave or THz bands, the number of multi-paths is small, usually ranging from 2 to 4 \cite{liu2021thz}.
	As a result, this does not significantly increase the beam training overhead of the proposed method.} 
\end{remark}
\begin{figure}[!t]
	\centering
	\vspace{-14pt}
	\includegraphics[width = 0.85\columnwidth]{./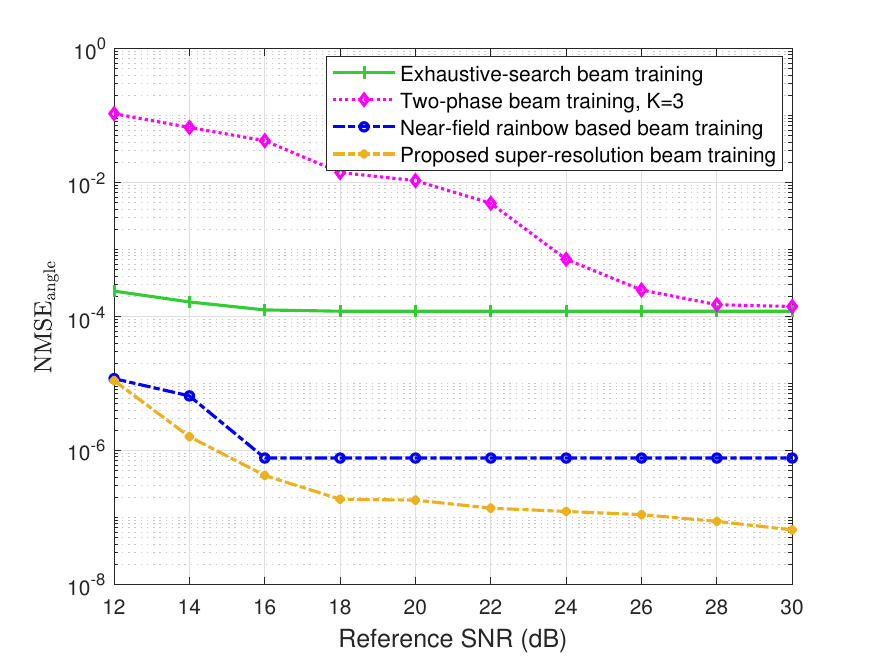}
	\caption{Angle estimation NMSE versus the reference SNR, where $ f_{\rm c} = 60 $ GHz, $ B = 3 $ GHz and $ M = 1024 $.}
	\label{fig:NMSE angle}
	\vspace{-10pt}
\end{figure}
\begin{figure}[!t]
	\centering
	\includegraphics[width = 0.85\columnwidth]{./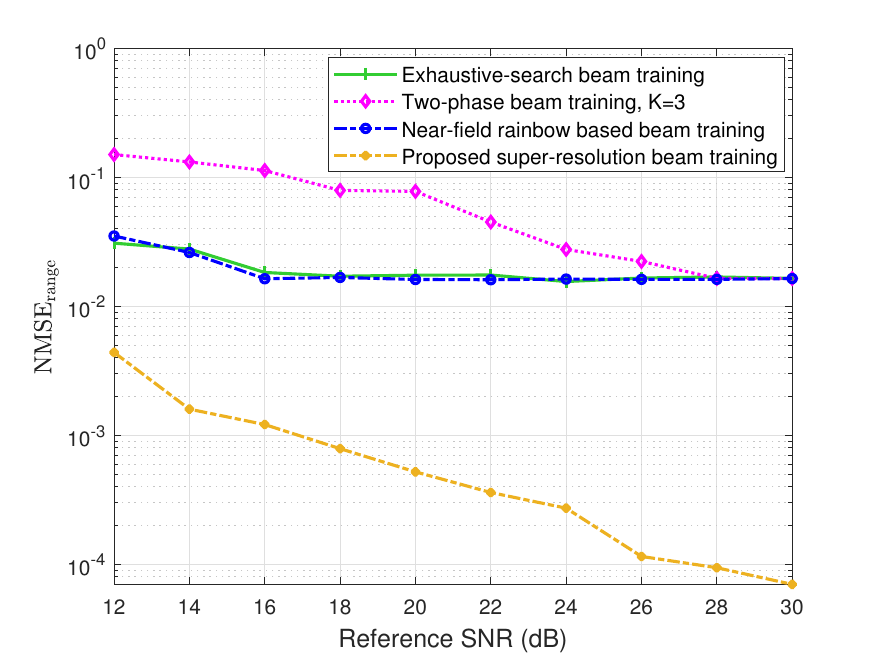}
	\caption{Range estimation NMSE versus the reference SNR, where $ f_{\rm c} = 60 $ GHz, $ B = 3 $ GHz and $ M = 1024 $.}
	\vspace{-10pt}
	\label{fig:NMSE range}
\end{figure}
\section{Numerical Results}
\label{Sec:numericalResults}
In this section, numerical results are provided to demonstrate the effectiveness of our proposed super-resolution wideband beam training method for near-field communications.
\subsection{System Setup and Benchmark Schemes}
\label{Sec:benchmark schemes}
The BS system parameters are set as follows.
We assume that the BS is equipped with $ N = 513 $ antennas and we activate a central S-ULA with the activation interval $ U = 8 $, while the number of antennas of the activated central dense subarray in the second stage is $ Q = 129 $.
The carrier frequency is $ f_c =  60$ GHz and the bandwidth is $ B = 3 $ GHz with $ M = 1024 $ subcarriers.
The transmit power of the BS and the noise power are set as $ P_t = 30 $ dBm and $ \sigma^2 = -80 $ dBm, respectively.
Moreover, the reference SNR at $m$-th subcarrier is defined by $ {\rm SNR}_m = \frac{\widetilde{Q}P_t\beta_m}{r_0^2\sigma^2} $, while the normalized mean square errors (NMSEs) for the angle and range estimation are defined as $ {\rm NMSE}_{\rm angle} = $  $\frac{\mathbb{E}\big( |\theta_0 - \theta^{\ast}|^2\big)}{\mathbb{E}\big( |\theta_0|^2 \big)}$ and $ {\rm NMSE}_{\rm range} = \frac{\mathbb{E}\big( |r_0 - r^{\ast}|^2\big)}{\mathbb{E}\big( |r_0|^2 \big)}$, respectively.
For performance comparison, we consider the following benchmark schemes.
\begin{figure}
	\centering
	\vspace{-14pt}
	\includegraphics[width = 0.85\columnwidth]{./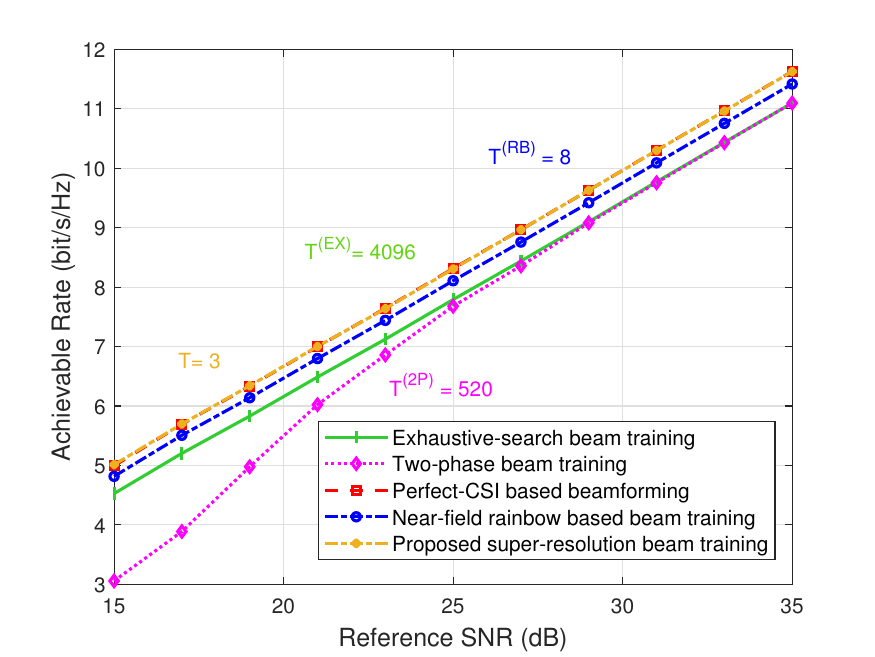}
	\caption{Achievable rate versus reference SNR, where $ f_c = 60 $ GHz, $ B = 3 $ GHz and $ M = 1024 $.}
	\vspace{-10pt}
	\label{fig:Rate versus SNR}
\end{figure}
\begin{figure}
	\centering
	\includegraphics[width = 0.85\columnwidth]{./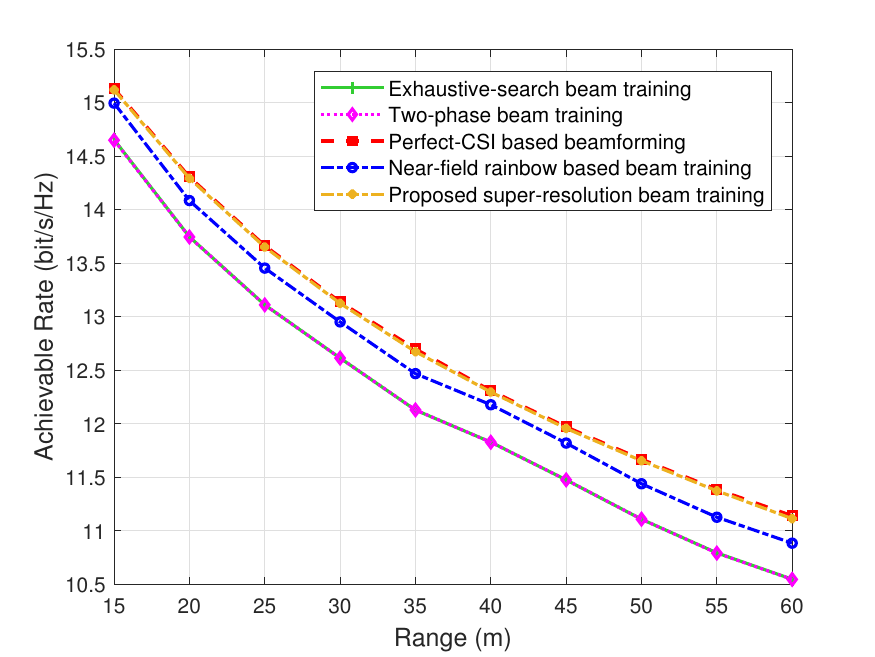}
	\caption{Achievable rate versus user range, where $ f_{\rm c} = 60 $ GHz, $ B = 3 $ GHz and $ M = 1024 $.}
	\vspace{-10pt}
	\label{fig:Rate versus range}
\end{figure}
\begin{itemize}
	\item {\rm \textbf{Perfect-CSI based beamforming:}}
	This method assumes that the BS knows the CSI in prior and the beamforming vector is $ {\mathbf{w}} =  \mathbf{b}(r_{0},\theta_{0})$, which perfectly aligns the near-field channel of the typical user. 
	As such, this scheme serves as the performance upper bound for the beam training methods.
	\item {\rm \textbf{Exhaustive-search based beam training:}}
	For the exhaustive-search beam training method, the polar-domain codebook proposed in \cite{Cui2022channel} is applied, which is given by $ {\mathcal{W}}_{\rm{Pol}} = \{{\mathcal{W}}_1,\cdots,{\mathcal{W}}_{\bar n},\cdots,{\mathcal{W}}_{N} \} $.
	Specifically, we have $ {\mathcal{W}}_{\bar n} = \{ {\mathbf{w}}_{\bar{n},1} \cdots,{\mathbf{w}}_{\bar{n},\bar v},\cdots {\mathbf{w}}_{\bar{n},V}\}$ with  $ {\mathbf{w}}_{{\bar n},\bar v}=  \mathbf{b}\left(r_{{\bar n}, \bar v}, \theta_{\bar n}\right)$, where $ \theta_{\bar{n}} = \frac{2 {\bar n}-N+1}{N}, \forall {\bar n} \in \bar{\mathcal{N}} \triangleq \{1,2, \cdots, N\} $ and $ r_{{\bar n},{\bar v}}=\frac{1}{{\bar v}} \alpha_{\Delta}\left(1-\theta_{{\bar n}}^2\right), \quad \forall \bar v\in{\bar {\mathcal{V}}}\triangleq\{1,2,3, \cdots V\} $ represent the sampling angles and ranges, respectively. Moreover, $ V $ and $ \alpha_{\Delta} $ denote the number of sampling ranges and a constant corresponding to the quantization loss in the range domain, respectively \cite{Cui2022channel}.
	As such, the BS sequentially transmits pilot signals to the user with codewords $ {\mathbf{w}}_{{\bar n},\bar v} $ in $ {\mathcal{W}}_{\rm{Pol}} $, which steers the beam at $ (\theta_{\bar{n}}, r_{{\bar n},{\bar v}}) $.
	Then, via comparing the received signal power, the user location can be estimated with a beam training overhead $ T^{\rm (EX)} = NV $.
	\item {\rm \textbf{Near-field rainbow based beam training:}}
	Compared with the exhaustive-search beam training method, the near-field rainbow based scheme exploits the beam-squint effect to control the beams formed at different subcarriers focused on multiple angles within the same range-ring \cite{cui2022near}. 
	In particular, by carefully designing the TD parameters, $ M $ beams will be focused on a specific range-ring $ \alpha_{\bar v} =  \frac{{\bar v}}{2\alpha_{\Delta}}, \forall \bar v\in{\bar {\mathcal{V}}}$.
	Hence, the overhead of the near-field rainbow based beam training method is the number of the range-rings, i.e, $T^{\rm{(RB)}} = V$, which is the number of sampling ranges in the polar-domain codebook $ {\mathcal{W}}_{\rm{Pol}} $.
	\item {\rm \textbf{Two-phase beam training:}}
	The two-phase beam training method proposed in \cite{two_phase} exploited the energy-spread effect to decouple the angle and range estimation.
	In particular, the angle is estimated by the middle of the $3$-dB angular support of the received beam pattern, which can be obtained by the beam sweeping with the well-known DFT codebook.
	Then, the range can be estimated by the beam sweeping with the polar-domain codebook in the estimated user angle. 
	Hence, the overhead of two-phase beam training method is given by $T^{\rm{(2P)}} = N + V$.
    However, the median angle of $3$-dB angular support is not accurate enough due to the power fluctuation and received noise \cite{two_phase}.
    Then, an effective middle-$K$ angle selection scheme which selects $K$ candidate angles around the middle of the the $3$-dB angular support (instead of selecting only one middle angle) was proposed in \cite{two_phase}.
    As such, range sweeping is required to perform in the $K$ candidate user angles and the beam training overhead of the two-phase is recast as $T^{\rm{(2P)}} = N + KV$.
\end{itemize}
\vspace{-5pt}
\subsection{Performance Analysis}
In Fig. \ref{fig:NMSE angle}, we plot the angle estimation NMSE versus the reference SNR.
Several important observations can be summarized as follows.
First, it is shown that the angle estimation NMSE of the proposed wideband beam training method decreases with the reference SNR and is significantly lower than that of all benchmark schemes.
This is because the proposed method achieves higher resolution via using the activated S-ULA to form much more beams in the angular domain and thus causes smaller angle estimation error.
Second, at the low-SNR regime, the angle estimation NMSE of the near-field rainbow based beam training scheme approaches that of the proposed method, which can be explained that our proposed wideband beam training method is generally more sensitive to the noise due to the sparse activation (sacrificing array gain), thereby resulting in lower estimation accuracy as expected.

Fig. \ref{fig:NMSE range} shows the range estimation NMSE versus the reference SNR.
It is observed that the range estimation NMSE of our proposed method significantly outperforms all benchmarks schemes especially at the high-SNR regime.
This can be explained that the benchmark schemes are on-grid beam training methods and utilize predefined codebooks with a limited size to perform the range estimation, which relies on the number of sampling ranges $ V $, thereby resulting in a lower estimation accuracy. 
However, the proposed method leverages a large number of beams formed at different subcarriers to be focused on the desired angle and the range estimation accuracy depends on the number of subcarriers.
Hence, the proposed wideband beam training scheme can achieve a better range estimation accuracy due to $ M \gg L $.

In Fig. \ref{fig:Rate versus SNR}, we plot the achievable-rate performance of different beam training schemes versus the reference SNR.
First, it is observed that the achievable-rate performance of the proposed wideband beam training method outperforms the benchmark schemes and approaches that of the perfect-CSI based beamforming.
This is because our proposed wideband beam training method achieves a super-resolution estimation in both angle and range domains.
Moreover, in the low-SNR regime, the achievable-rate performance of the two-phase beam training scheme is significantly worse than that of other methods.
This is because the two-phase beam training method uses the energy-spread effect to estimate the user angle and achieves lower angle estimation accuracy at the low-SNR regime.

In Fig. \ref{fig:Rate versus range}, we plot the achievable-rate performance versus the user range.
In particular, the typical user is uniformly distributed within the angular sector $ \theta \in [-0.5, 0.5] $.
Several interesting observations are made as follows.
First, the achievable-rate performance of all schemes decreases with the user range.
This is because the reference SNR decreases with the user range, which degrades the achievable-rate performance.
Second, we observe that, for all user ranges, the proposed wideband beam training method outperforms all near-field beam training schemes w.r.t. achievable-rate performance, due to higher angle and range estimation accuracy achieved through the proposed sparse-activation method.
\section{Conclusions}
\label{Sec:Conclusions}
In this paper, we proposed a three-stage near-field wideband beam training method. 
We first studied the phenomenon named rainbow blocks due to the beam-split effect of an activated S-ULA in both spatial and frequency domains, which can be used to significantly extend the beam coverage regions in the angular domain.
However, inevitable angular ambiguity was introduced by the beam-split effect in the spatial domain.
To resolve this ambiguity, we selected specific subcarriers to steer single beams towards candidate user angles in the second stage and the actual user angle was estimated by comparing the received signal power over the selected subcarriers.
Finally, in the third stage, by means of the TD-PS architecture, the one-pilot range estimation was achieved by controlling the splitted beams over all subcarriers focused on the estimated angle but different ranges.
Numerical results demonstrated the effectiveness of the proposed beam training method, which achieved more accurate angle and range estimation with only three pilots as compared to existing benchmark schemes.

\bibliographystyle{IEEEtran}
\bibliography{IEEEabrv}

\begin{thebibliography}{10}
\providecommand{\url}[1]{#1}
\csname url@samestyle\endcsname
\providecommand{\newblock}{\relax}
\providecommand{\bibinfo}[2]{#2}
\providecommand{\BIBentrySTDinterwordspacing}{\spaceskip=0pt\relax}
\providecommand{\BIBentryALTinterwordstretchfactor}{4}
\providecommand{\BIBentryALTinterwordspacing}{\spaceskip=\fontdimen2\font plus
\BIBentryALTinterwordstretchfactor\fontdimen3\font minus
  \fontdimen4\font\relax}
\providecommand{\BIBforeignlanguage}[2]{{%
\expandafter\ifx\csname l@#1\endcsname\relax
\typeout{** WARNING: IEEEtran.bst: No hyphenation pattern has been}%
\typeout{** loaded for the language `#1'. Using the pattern for}%
\typeout{** the default language instead.}%
\else
\language=\csname l@#1\endcsname
\fi
#2}}
\providecommand{\BIBdecl}{\relax}
\BIBdecl

\bibitem{han2019terahertz}
C.~Han, Y.~Wu, Z.~Chen, and X.~Wang, ``Terahertz communications ({TeraCom}):
  Challenges and impact on {6G} wireless systems,'' \emph{arXiv preprint
  arXiv:1912.06040}, 2019.

\bibitem{tan2022thz}
J.~Tan and L.~Dai, ``Thz precoding for {6G}: Challenges, solutions, and
  opportunities,'' \emph{IEEE Wireless Commun.}, vol.~30, no.~4, pp. 132--138,
  Aug. 2022.

\bibitem{Chen2019survey}
Z.~Chen, X.~Ma, B.~Zhang, Y.~Zhang, Z.~Niu, N.~Kuang, W.~Chen, L.~Li, and
  S.~Li, ``A survey on terahertz communications,'' \emph{China Commun.},
  vol.~16, no.~2, pp. 1--35, Mar. 2019.

\bibitem{nf_mag}
M.~Cui, Z.~Wu, Y.~Lu, X.~Wei, and L.~Dai, ``Near-field {MIMO} communications
  for {6G}: Fundamentals, challenges, potentials, and future directions,''
  \emph{{IEEE} Commun. Mag.}, vol.~61, no.~1, pp. 40--46, Jan. 2023.

\bibitem{cong2024near}
J.~Cong, C.~You, J.~Li, L.~Chen, B.~Zheng, Y.~Liu, W.~Wu, Y.~Gong, S.~Jin, and
  R.~Zhang, ``Near-field integrated sensing and communication: Opportunities
  and challenges,'' \emph{IEEE Wireless Commun.}, vol.~31, no.~6, pp. 162 --
  169, Dec. 2024.

\bibitem{nf_bm}
C.~You, Y.~Zhang, C.~Wu, Y.~Zeng, B.~Zheng, L.~Chen, L.~Dai, and {A. L.
  Swindlehurst}, ``Near-field beam management for extremely large-scale array
  communications,'' \emph{arXiv preprint arXiv:2306.16206}, 2023.

\bibitem{zhang2023near}
Y.~Zhang, B.~Di, H.~Zhang, and L.~Song, ``Near-far field beamforming for
  holographic multiple-input multiple-output,'' \emph{J. Commun. Inf. Netw.},
  vol.~8, no.~2, pp. 99--110, Jun. 2023.

\bibitem{an2024near}
J.~An, C.~Yuen, L.~Dai, M.~D. Renzo, M.~Debbah, and L.~Hanzo, ``Near-field
  communications: Research advances, potential, and challenges,'' \emph{IEEE
  Wireless Commun.}, vol.~31, no.~3, pp. 100 -- 107, Jun 2024.

\bibitem{lu2023tutorial}
H.~Lu, Y.~Zeng, C.~You, Y.~Han, J.~Zhang, Z.~Wang, Z.~Dong, S.~Jin, C.-X. Wang,
  T.~Jiang \emph{et~al.}, ``A tutorial on near-field {XL-MIMO} communications
  towards {6G},'' \emph{IEEE Commun. Surv. Tut.}, vol.~26, no.~4, pp. 2213 --
  2257, Apr. 2024.

\bibitem{zhang20236g}
H.~Zhang, N.~Shlezinger, F.~Guidi, D.~Dardari, and Y.~C. Eldar, ``{6G} wireless
  communications: From far-field beam steering to near-field beam focusing,''
  \emph{IEEE Commun. Mag.}, vol.~61, no.~4, pp. 72--77, Mar. 2023.

\bibitem{xu2024resource}
B.~Xu, J.~Zhang, H.~Du, Z.~Wang, Y.~Liu, D.~Niyato, B.~Ai, and K.~B. Letaief,
  ``Resource allocation for near-field communications: Fundamentals, tools, and
  outlooks,'' \emph{IEEE Wireless Commun.}, vol.~31, no.~5, pp. 42--50, Jul.
  2024.

\bibitem{wpt}
H.~Zhang, N.~Shlezinger, F.~Guidi, D.~Dardari, M.~F. Imani, and Y.~C. Eldar,
  ``Near-field wireless power transfer for {6G} internet of everything mobile
  networks: Opportunities and challenges,'' \emph{{IEEE} Commun. Mag.},
  vol.~60, no.~3, pp. 12--18, Mar. 2022.

\bibitem{ldma}
Z.~Wu and L.~Dai, ``{Multiple access for near-field communications: {SDMA} or
  {LDMA}?}'' \emph{IEEE J. Sel. Areas Commun.}, vol.~41, no.~6, pp. 1918--1935,
  Jun. 2023.

\bibitem{liu_review}
Y.~Liu, Z.~Wang, J.~Xu, C.~Ouyang, X.~Mu, and R.~Schober, ``Near-field
  communications: A tutorial review,'' \emph{IEEE Open J. Commun. Society},
  vol.~4, pp. 1999--2049, Aug. 2023.

\bibitem{you2024next}
C.~You, Y.~Cai, Y.~Liu, M.~Di~Renzo, T.~M. Duman, A.~Yener, and A.~L.
  Swindlehurst, ``Next generation advanced transceiver technologies for {6G}
  and beyond,'' \emph{{IEEE} J. Sel. Areas Commun.}, vol.~43, no.~3, pp. 582 --
  627, Mar. 2025.

\bibitem{park2022beam}
S.-H. Park, B.~Kim, D.~K. Kim, L.~Dai, K.-K. Wong, and C.-B. Chae, ``Beam
  squint in ultra-wideband {mmWave} systems: {RF} lens array vs.
  phase-shifter-based array,'' \emph{IEEE Wireless Commun.}, vol.~30, no.~4,
  pp. 82--89, Aug. 2022.

\bibitem{cui2024near}
M.~Cui and L.~Dai, ``Near-field wideband beamforming for extremely large
  antenna arrays,'' \emph{IEEE Trans. Wireless Commun.}, vol.~23, no.~10, pp.
  3592--3604, Oct. 2024.

\bibitem{su2023wideband}
R.~Su, L.~Dai, and D.~W.~K. Ng, ``Wideband precoding for {RIS}-aided {THz}
  communications,'' \emph{IEEE Trans. Commun.}, vol.~71, no.~6, pp. 3592--3604,
  Jun. 2023.

\bibitem{Cui2022channel}
M.~Cui and L.~Dai, ``{Channel estimation for extremely large-scale MIMO:
  Far-field or near-field?}'' \emph{IEEE Trans. Commun.}, vol.~70, no.~4, pp.
  2663--2677, Jan. 2022.

\bibitem{huang2023low}
C.~Huang, J.~Xu, W.~Xu, X.~You, C.~Yuen, and Y.~Chen, ``Low-complexity channel
  estimation for extremely large-scale {MIMO} in near field,'' \emph{IEEE
  Wireless Commun. Lett.}, vol.~13, no.~3, pp. 671--675, Mar. 2023.

\bibitem{chen2023non}
Y.~Chen and L.~Dai, ``Non-stationary channel estimation for extremely
  large-scale {MIMO},'' \emph{IEEE Trans. Wireless Commun.}, vol.~23, no.~7,
  pp. 7683--7697, Jul. 2024.

\bibitem{zheng2022survey}
B.~Zheng, C.~You, W.~Mei, and R.~Zhang, ``A survey on channel estimation and
  practical passive beamforming design for intelligent reflecting surface aided
  wireless communications,'' \emph{IEEE Commun. Surv. Tuts.}, vol.~24, no.~2,
  pp. 1035--1071, Sept. 2022.

\bibitem{zhou2024multi}
C.~Zhou, C.~You, Z.~Huang, S.~Shi, Y.~Gong, C.-B. Chae, and K.~Huang,
  ``Multi-beam training for near-field communications in high-frequency
  bands,'' \emph{arXiv preprint arXiv:2406.14931}, 2024.

\bibitem{xiao2016hierarchical}
Z.~Xiao, T.~He, P.~Xia, and X.-G. Xia, ``Hierarchical codebook design for
  beamforming training in millimeter-wave communication,'' \emph{IEEE Trans.
  Wireless Commun.}, vol.~15, no.~5, pp. 3380--3392, May 2016.

\bibitem{two_phase}
Y.~Zhang, X.~Wu, and C.~You, ``Fast near-field beam training for extremely
  large-scale array,'' \emph{{IEEE} Wireless Commun. Lett.}, vol.~11, no.~12,
  pp. 2625--2629, Dec. 2022.

\bibitem{wuxun}
X.~Wu, C.~You, J.~Li, and Y.~Zhang, ``Near-field beam training: Joint angle and
  range estimation with {DFT} codebook,'' \emph{{IEEE} Trans. Wireless
  Commun.}, vol.~23, no.~10, pp. 13\,110 -- 13\,124, Oct. 2024.

\bibitem{twostage}
C.~Wu, C.~You, Y.~Liu, L.~Chen, and S.~Shi, ``Two-stage hierarchical beam
  training for near-field communications,'' \emph{IEEE Trans. Veh. Technol.},
  vol.~73, no.~2, pp. 2032--2044, Feb. 2024.

\bibitem{dai_hier}
Y.~Lu, Z.~Zhang, and L.~Dai, ``Hierarchical beam training for extremely
  large-scale {MIMO}: From far-field to near-field,'' \emph{{IEEE} Trans.
  Commun.}, vol.~72, no.~4, pp. 2247--2259, Apr. 2024.

\bibitem{dl}
W.~Liu, H.~Ren, C.~Pan, and J.~Wang, ``Deep learning based beam training for
  extremely large-scale massive {MIMO} in near-field domain,'' \emph{{IEEE}
  Commun. Lett.}, vol.~27, no.~1, pp. 170--174, Jan. 2023.

\bibitem{jiang2023near}
G.~Jiang and C.~Qi, ``Near-field beam training based on deep learning for
  extremely large-scale {MIMO},'' \emph{IEEE commun. lett.}, vol.~27, no.~8,
  pp. 2063--2067, Aug 2023.

\bibitem{cui2022near}
M.~Cui, L.~Dai, Z.~Wang, S.~Zhou, and N.~Ge, ``Near-field rainbow: Wideband
  beam training for {XL-MIMO},'' \emph{IEEE Trans. Wireless Commun.}, vol.~22,
  no.~6, pp. 3899--3912, Nov. 2022.

\bibitem{zheng2024near}
T.~Zheng, M.~Cui, Z.~Wu, and L.~Dai, ``Near-field wideband beam training based
  on distance-dependent beam split,'' \emph{IEEE Trans. Wireless Commun.},
  vol.~24, no.~2, pp. 1278 -- 1292, Feb. 2025.

\bibitem{ouyang2024impact}
C.~Ouyang, Z.~Wang, B.~Zhao, X.~Zhang, and Y.~Liu, ``On the impact of reactive
  region on the near-field channel gain,'' \emph{IEEE Commun. Lett.}, vol.~28,
  no.~10, pp. 2417--2421, Jul. 2024.

\bibitem{nf_ris}
W.~Liu, C.~Pan, H.~Ren, F.~Shu, S.~Jin, and J.~Wang, ``Low-overhead beam
  training scheme for extremely large-scale {RIS} in near field,'' \emph{IEEE
  Trans. Commun.}, vol.~71, no.~8, pp. 4924--4940, May 2023.

\bibitem{yunpu_swipt}
Y.~Zhang and C.~You, ``{SWIPT} in mixed near- and far-field channels: Joint
  beam scheduling and power allocation,'' \emph{{IEEE} J. Sel. Areas Commun.},
  vol.~42, no.~6, pp. 1583 -- 1597, Apr. 2024.

\bibitem{yunpu2}
Y.~Zhang, C.~You, L.~Chen, and B.~Zheng, ``Mixed near- and far-field
  communications for extremely large-scale array: An interference
  perspective,'' \emph{IEEE Commun. Lett.}, vol.~27, no.~9, pp. 2496--2500,
  Sept. 2023.

\bibitem{shao2022target}
X.~Shao, C.~You, W.~Ma, X.~Chen, and R.~Zhang, ``Target sensing with
  intelligent reflecting surface: Architecture and performance,'' \emph{IEEE J.
  Sel. Areas Commun.}, vol.~40, no.~7, pp. 2070--2084, Jul. 2022.

\bibitem{zhou2024sparse}
C.~Zhou, C.~You, H.~Zhang, L.~Chen, and S.~Shi, ``Sparse array enabled
  near-field communications: Beam pattern analysis and hybrid beamforming
  design,'' \emph{arXiv preprint arXiv:2401.05690}, 2024.

\bibitem{liao2021terahertz}
A.~Liao, Z.~Gao, D.~Wang, H.~Wang, H.~Yin, D.~W.~K. Ng, and M.-S. Alouini,
  ``Terahertz ultra-massive {MIMO-based} aeronautical communications in
  space-air-ground integrated networks,'' \emph{{IEEE} J. Sel. Areas Commun.},
  vol.~39, no.~6, pp. 1741--1767, Apr. 2021.

\bibitem{tan2021wideband}
J.~Tan and L.~Dai, ``Wideband beam tracking in thz massive {MIMO} systems,''
  \emph{{IEEE} J. Sel. Areas Commun.}, vol.~39, no.~6, pp. 1693--1710, Apr.
  2021.

\bibitem{lin2017subarray}
C.~Lin, G.~Y. Li, and L.~Wang, ``Subarray-based coordinated beamforming
  training for mmwave and {sub-THz} communications,'' \emph{{IEEE} J. Sel.
  Areas Commun.}, vol.~35, no.~9, pp. 2115--2126, Sept. 2017.

\bibitem{chen2024near}
Y.~Chen and L.~Dai, ``Near-field wideband beam training for {ELAA} with uniform
  circular array,'' \emph{Science China Information Sciences}, vol.~67, no.~6,
  pp. 1--14, May 2024.

\bibitem{zhou2024nearsparseDFT}
C.~Zhou, C.~Wu, C.~You, J.~Zhou, and S.~Shi, ``Near-field beam training with
  sparse {DFT} codebook,'' \emph{IEEE Trans. Commun.}, 2024, Early Access.

\bibitem{liu2021thz}
S.~Liu, X.~Yu, R.~Guo, Y.~Tang, and Z.~Zhao, ``{THz} channel modeling:
  Consolidating the road to thz communications,'' \emph{China Commun.},
  vol.~18, no.~5, pp. 33--49, May 2021.

\end{thebibliography}

\end{document}